\theoremstyle{dotless}
\newtheorem{theorem}{Theorem}
\newtheorem{assumption}{Assumption}
\newtheorem{lemma}{Lemma}
\newlength{\dhatheight}
\newlength{\dtildeheight}
\newcommand\numberthis{\addtocounter{equation}{1}\tag{\theequation}}
\newcommand*\diff{\mathop{}\!\mathrm{d}}
\newcommand{\Rn}[1]{%
  \textup{\uppercase\expandafter{\romannumeral#1}}%
}
\newcommand\independent{\protect\mathpalette{\protect\independenT}{\perp}}
\def\independenT#1#2{\mathrel{\rlap{$#1#2$}\mkern2mu{#1#2}}}
\DeclarePairedDelimiterX{\norm}[1]{\lVert}{\rVert}{#1}
\DeclarePairedDelimiterX{\abs}[1]{\lvert}{\rvert}{#1}
\newcommand{\stwo}[1]{{#1}}   
\def\hat{\widehat}
\def\SS{{\cal S}}
\def\dbar{\ \bar{}\! \! d}
\def\mbar{\bar{m}}
\newcommand{\subalign}[1]{%
  \vcenter{%
    \Let@ \restore@math@cr \default@tag
    \baselineskip\fontdimen10 \scriptfont\tw@
    \advance\baselineskip\fontdimen12 \scriptfont\tw@
    \lineskip\thr@@\fontdimen8 \scriptfont\thr@@
    \lineskiplimit\lineskip
    \ialign{\hfil$\m@th\scriptstyle##$&$\m@th\scriptstyle{}##$\crcr
      #1\crcr
    }%
  }
}
\renewenvironment{abstract}
 {\small
  \begin{center}
  \bfseries \abstractname\vspace{-.5em}\vspace{0pt}
  \end{center}
  \list{}{
    \setlength{\leftmargin}{1cm}%
    \setlength{\rightmargin}{\leftmargin}%
  }%
  \item\relax}
 {\endlist}
\title{}
\author{}
\begin{document}

\setlength{\abovedisplayskip}{2.5pt}
\setlength{\belowdisplayskip}{2.5pt}


\thispagestyle{empty}

\begin{center}
{\Large \bf Nonparametric Estimation of Conditional Expectation with Auxiliary Information and Dimension Reduction}	
\end{center}
\begin{center}
Bingying Xie and Jun Shao\\
Department of Statistics, University of Wisconsin-Madison
\end{center}
\begin{abstract}
Nonparametric estimation of the conditional expectation $E(Y | U)$ of an outcome $Y$ given a covariate vector $U$  is of primary importance in many statistical applications such as prediction and personalized medicine. In some problems, there is an additional  auxiliary variable $Z$ in the training dataset used to construct estimators, but $Z$ is not available for future prediction or selecting patient treatment  in personalized medicine. For example, in the training  dataset  longitudinal outcomes are observed, but only the last outcome $Y$ is concerned  in the future prediction or analysis. The longitudinal outcomes other than the last point is then the variable $Z$ that is observed and related with both $Y$ and $U$. Previous work on 
how to make use of $Z$ in the estimation of $E(Y|U)$ mainly focused on using $Z$ in the construction of a linear function of $U$ to reduce covariate dimension for better estimation. Using $E(Y|U) = E\{E(Y|U, Z)| U\}$, we propose 
a two-step estimation of inner and outer expectations, respectively, with sufficient dimension reduction for kernel estimation in both steps. The information from $Z$ is utilized not only in dimension reduction, but also directly in the estimation. 
Because of the existence of different ways for dimension reduction, we construct two estimators that may improve the estimator without using $Z$. The improvements are shown in  the convergence rate of estimators as the sample size increases to infinity as well as in the finite sample simulation performance.
A real data analysis about the selection of mammography intervention  is presented for illustration. \\
\end{abstract}

\noindent
\textbf{Key Words:} Auxiliary information; Convergence rate;  Kernel estimation;  Sufficient dimension reduction; Two-step regression.

\newpage
\section{Introduction}

In many statistical applications, a key step is  to estimate the  conditional expectation of $Y$ given $U= u_0$, denoted as $\psi (u_0) = E(Y | U=u_0)$ in what follows, based on  a training  sample dataset,  
where $Y$ is a response of interest, $U$ is a vector of covariates, and $u_0$ is a given specific value of $U$. Apparently, the prediction of a future $Y$ at $U=u_0$  is an example.  Another example is in the area of personalized medicine in which we would like to maximize the condition expectation $E(Y|U=u_0, a)$ over several treatment options $a=1,...,k$ \citep{qian_murphy_2011}, where 
$u_0$ is the vector of a future patient's prognostic 
factors and  demographic variables, and $Y$ is his or her future outcome. Larger (or smaller) $Y$ means better outcome. Because parametric modeling of $\psi (u_0)$ is difficult in many applications such as the personalized medicine problems, nonparametric kernel estimation of $\psi (u_0)$  
\citep{nadaraya_1964, watson_1964}
has been widely considered and used. 
As shown in Theorem 2.2.2 of \citet{bierens},  the optimal convergence rate of a kernel estimator is $n^{-m/(2m+p)}$, where $m$ is the order of kernel and $p$ is the dimension of $U$.   When $p$ is not small, it is crucial to search for a 
matrix $B$ with the smallest possible column dimension $d_0 < p $ such that 
 $E(Y|U)=E(Y|B^{T}U)$,  where $B^T$ is the transpose of $B$, and hence the optimal convergence rate is improved to  $n^{-m/(2m+d_0)}$.
 This is usually achieved by using 
the training data to estimate a $B$ with smallest  column dimension such that
$Y\independent U \mid B^T U$, i.e., $Y$ and $U$ are independent conditional on $B^TU$,  which is referred to as sufficient dimension reduction   (SDR)
\citep{li_1991, cook_weisberg_1991, xia_tong_li_zhu_2002, li_wang_2007, ma_zhu_2012}. The linear space generated by the columns of $B$  is called  the central subspace for $Y$ given $U$ and is denoted as ${\cal S}(B)= {\cal S}_{Y|U}$. 

 Besides $U$, in some problems there exists a vector $Z$ of 
auxiliary variables in the training sample dataset, but $Z$ is not available in the future. For example, in many clinical or observational studies, covariate $U$ and longitudinal responses $Y_1,...,Y_T$ are observed in the training dataset, where $Y_t$ is the response at time $t$, but in the future, we may only observe $U=u_0$ 
 to predict $Y=Y_T$ at time $T$ without  the additional 
   $Z = (Y_1,...,Y_{T-1})$.  In many situations, $Z$ may be more related with $Y$ than $U$.    
   This raises an issue of how to make use of the available data in $Z$ in the training dataset to improve the estimation of $\psi (u_0)$. 
   
 
 
 Efforts have been made in utilizing $Z$ data in SDR.
For a discrete $Z$ taking values $z_1,...,z_L$,  \citet{chiaromonte_cook_li_2002}
   proposed to use  ${\cal S}_{Y|U}^Z = {\cal S}(B_{z_1})
   \oplus \cdots \oplus {\cal S}(B_{z_L})$, where 
    $B_{z_l}$  has the smallest column dimension such that $
Y\independent U \mid  B_{z_l}^T U, Z= z_l$ and 
 ${\cal S}_1 \oplus {\cal S}_2
= \{s_1+s_2 :  s_{j}\in {\cal S}_j, j=1,2 \}$.
However, they cannot guarantee that  ${\cal S}_{Y|U}^Z $
coincides with the central subspace ${\cal S}_{Y|U} = {\cal S}(B)$, 
  $Y\independent U \mid B^T U$.
  \citet{hung_liu_lu_2015} proposed a two-stage method of searching $B$ in a $Z$-envelope  $= {\cal S}_{Y|U}^Z  \oplus  {\cal S}_{Z|U} \supseteq {\cal S}(B)$, where  ${\cal S}_{Z|U}$ is the central subspace for $Z$ given $U$. 
Although their method utilizes $Z$ data to produce a better $B$ estimator,
the resulting  estimator of $\psi(u_0)$ has the same convergence rate as the estimator  based on an estimator of $B$ without using $Z$ data, because a better estimator of $B$ does not improve the convergence rate of the estimator of $\psi(u_0)$. 

Instead, in this paper we propose an idea of using $Z$ data in the estimation of $\psi (u_0)$ directly, based on the following well known identity:
\begin{equation}
\psi (u_0) = E(Y|U=u_0) = E\{ E(Y|Z, U=u_0) | U=u_0 \} \label{identity}
\end{equation}
We utilize the $Z$ information in the estimation of  inner expectation $E(Y|Z,U)$
treating $Z$ as a part of covariate  as well as outer expectation $E\{ \, \cdot \, | U=u_0\} $ using the conditional distribution of $Z$ given $U$. SDR is applied in the kernel estimation of both expectations and is necessary because incorporating $Z$ data increases the dimensions of kernels in kernel  estimation. 

We consider two ways of reducing dimensions, which lead to two different  estimators of $\psi (u_0)$. The first method  performs  SDR to find a matrix $C_{zu}= \left(^{C_z}_{C_u}\right)$
so that the inner expectation 
is  $E(Y|Z,U) = E(Y | C_{zu}^T \big(_U^Z\big))=
E(Y|C_z^T Z +C_u^T U)$,  and then another SDR to find a matrix $C$ with $C_z^TZ \independent U \mid C^TU$ which implies  $E(Y|U) =E\{ E(Y|C_z^T Z+ C_u^T U)|C^TU\} $. We show that the convergence rate of kernel estimator of $\psi (u_0)$ using this method is 
$n^{-m/(2m+d_1)}$ depending on $d_1$, the column dimension of $C$,
 not the column dimension of $C_{zu}$. Although the column dimension of $C_{zu}$ does not affect the convergence rate, reducing $(Z,U)$ to $C_z^T Z +C_u^T U$ is still important for kernel estimation of the inner expectation. 
 
 However, it is not always true that $d_1\leq d_0$, the dimension of the central subspace $\SS_{Y|U}= \SS(B)$ without $Z$. Thus,  the estimator of $\psi(u_0)$ based on the first method does not always improve the estimator without using $Z$. To ensure obtaining an estimator with a convergence rate no slower than that of the estimator without using $Z$ data, we propose the second method using SDR
  to find the following matrices: (i)  a matrix $B$ satisfying $E(Y|U)= E(Y|B^TU)$; (ii) a matrix $D_{zu} = \left(^{D_z}_{D_u}\right)$ 
 satisfying  
$E(Y|Z,B^TU)= E(Y| D_{zu}^T \big(_{B^TU}^{~~ Z}\big)) = E(Y|D_z^TZ +D_u^TB^TU)$; and (iii) a matrix $D$ satisfying 
$D_z^TZ \independent B^TU \mid D^TB^TU$.   
Then, 
\[ E(Y|U)= E(Y|B^TU) = E\{ E(Y| D_z^TZ+ D_u^TB^TU)|D^TB^TU\} . \]
We show that  the convergence rate of kernel estimator of 
$\psi(u_0)$ using this method is \linebreak 
$n^{-m/(2m+d_2)}$ depending on $d_2$, the column dimension of $D$. 
By applying SDR, it is 
guaranteed that $d_2 \leq d_0$. In fact, $d_2 <d_0$ in many situations. For the first method, in some situations $d_1$ can be even smaller than $d_2 \leq d_0$, although it does not guarantee $d_1 \leq d_0$. See Examples 1-3 in Section 2. 

Why can we improve the convergence rate in estimating $\psi(u_0)$? Without $Z$ data, the best we can do is to use the central subspace $\SS_{Y|U} = \SS(B)$ whose  dimension determines the convergence rate. \citet{hung_liu_lu_2015} utilized $Z$ data to improve the estimation of $B$, but they could not improve the convergence rate. 
However, our approach is to use formula (\ref{identity}) and estimate $\psi(u_0)$ in two steps, the estimation of inner and outer expectations, with SDR in both steps. Because the convergence rate depends on the convergence rate of outer expectation estimation involving $Z$ given $U$, we may be able to make use of a space that is smaller than $\SS(B)$, e.g., the space generated by columns of $BD$ in our second method, which cannot be achieved without the inner expectation estimation involving $Z$ data. 
 
Details of our proposed estimation procedures are given in Section 2 with three examples for illustration. 
In Section 3, we establish the asymptotic normality of proposed estimators under some regularity conditions, and  obtain the optimal convergence rates and the asymptotic mean squared errors. 
Simulation studies under  various circumstances are considered in Section 4. A real data analysis about the selection of mammography intervention methods is carried out in  Section 5 to illustrate our procedure.  All the technical proofs are given in the Appendix. 
	
\section{Methodology}

Throughout  we use  $K_h$ as a generic notation for a  kernel with an appropriate dimension and bandwidth $h$, i.e., $K_h$ appeared in different places may be different. Assumptions on the kernels are introduced in Section 3. 
Let $\{ Y_i,U_i,Z_i, i=1,...,n\}$ be an independent and identically distributed training sample of size $n$  from $(Y,U,Z)$.  
Without using $Z$ data and dimension reduction, 
a  kernel regression estimator of $\psi (u_0)$ defined in (\ref{identity}) is 
\begin{align*} 
\hat{\psi}_{p}(u_0)=\sum_{i=1}^{n}Y_{i}K_{h}(U_{i}-u_{0})
\bigg/ \sum_{i=1}^{n}K_{h}(U_{i}-u_{0})
\end{align*}
where the subscript $p$ indicates that $\hat\psi_p$ uses a kernel with dimension $p$, the dimension of $U$. 

Suppose that $Y\independent U \mid B^TU$, where $B$ has the smallest column dimension $d_0 \leq p$.  The estimator 
$\hat{\psi}_p(u_0)$ can be improved by 
\begin{align} \label{esti2}
\hat{\psi}_{d_0}(u_0)=\sum_{i=1}^{n}Y_{i}K_{h}(\hat{B}^T U_i -\hat{B}^T u_{0})\bigg/ \sum_{i=1}^{n}K_{h}(\hat{B}^T U_i -\hat{B}^T u_{0})
\end{align}
where  $\hat{B}$ is an estimator of $B$ by SDR. 


To make use of the auxiliary information provided by $Z$, we use identity 
(\ref{identity}) and first estimate the inner expectation 
$ E(Y|Z, U)$. 
Following the discussion in Section 1, 
we construct SDR estimators $\hat{C}_z$ and $\hat{C}_u$ of
  $C_z$ and $C_u$, respectively, with  $E(Y|Z,U)= E(Y|C_z^T Z + C_u^T U)$. Then $E(Y|Z,U)$ can be estimated by \vspace{-1mm}
 \begin{align}\label{esti31}
\begin{split}
\hat{\varphi}_1(Z,U) 
=\sum_{j=1}^{n}Y_{j}K_{h}
( \hat{C}_z^TZ_j + \hat{C}_u^T U_j  -\hat{C}_z^TZ - \hat{C}_u^T U ) \\
\! \bigg/  \!
\sum_{j=1}^{n}K_{h}
( \hat{C}_z^TZ_j + \hat{C}_u^T U_j  -\hat{C}_z^TZ - \hat{C}_u^T U )
\end{split}
\end{align}
For the second step of estimating the outer expectation in (\ref{identity}), we construct an SDR estimator $\hat{C}$ of  $C$ satisfying  $C_z^TZ \independent U \mid C^TU$.  Then our first proposed  estimator of $\psi(u_0)$ is 
\begin{align} \label{esti3}
\hat{\psi}_{d_1}(u_0)&=\sum_{i=1}^{n} \hat{\varphi}_1( Z_i , u_0)
K_{h}(\hat{C}^TU_{i}-\hat{C}^Tu_0)\bigg/
\sum_{i=1}^{n}K_{h}(\hat{C}^T U_{i}-\hat{C}^Tu_0)
\numberthis 
\end{align}
where $d_1$ is the column dimension of $C$, which   
is not necessarily smaller than $d_0$, the column dimension of $B$.  Thus, $\hat\psi_{d_1}(u_0)$ is not always better than $\hat\psi_{d_0}(u_0)$ 
in terms of convergence rate established in Section 3. 





To derive an estimator having convergence rate no slower than that of $\hat\psi_{d_0}(u_0)$, we build the improvement on $(Y,Z,B^TU)$, instead of 
$(Y,Z,U)$ in the derivation of $\hat\psi_{d_1}(u_0)$, since $Y \independent U \mid B^TU$. We still use (\ref{identity}) to do estimation in two steps. 
The first step is the same as the first step of constructing $\hat\psi_{d_1}(u_0)$ in (\ref{esti3}) except that $U$ is replaced by $B^TU$. That is, we construct 
 SDR estimators $\hat{D}_z$ and $\hat{D}_u$ of
$D_z$ and $D_u$, respectively, with  $E(Y|Z,B^TU)= E(Y|D_z^T Z + D_u^T B^TU)$, and estimate 
$E(Y|Z,B^TU)$  by 
\begin{align*}
 \hat{\varphi}_2(Z,\hat B^TU)  
 =  \sum_{j=1}^{n}Y_{j}K_{h}
( \hat{D}_z^TZ_j \! + \! \hat{D}_u^T \hat B^T U_j  \!-\! \hat{D}_z^TZ \!-\! \hat{D}_u^T \hat B^T U ) \\ 
\! \bigg/ \!
\sum_{j=1}^{n}K_{h}
( \hat{D}_z^TZ_j \!+\! \hat{D}_u^T \hat B^T U_j \! -\!\hat{D}_z^TZ \!-\! \hat{D}_u^T \hat B^T U )
\end{align*}
where $\hat{B}$ is defined in (\ref{esti2}).
For the second step, we construct an SDR estimator $\hat D$ of $D$ satisfying 
$D_z^TZ \independent B^TU \mid {D}^TB^TU$. 
Then, our second proposed estimator of $\psi(u_0)$ is 
\begin{align} \label{esti4}
\hat{\psi}_{d_2}(u_0)&=\sum_{i=1}^{n} \hat{\varphi}_2( Z_i , \hat B^T u_0)
K_{h}(\hat{D}^T\hat{B}^T U_{i}-\hat{D}^T\hat{B}^Tu_0)\bigg/
\sum_{i=1}^{n}K_{h}(\hat{D}^T \hat{B}^TU_{i}-\hat{D}^T\hat{B}^Tu_0)
\end{align}
where $d_2$ is the column dimension of $D$. Note that 
$d_2 \leq d_0$, the column dimension of $B$. 

The next lemma shows the relationship among the spaces generated by $B$, $C$, and $D$. \vspace{-1mm}
\begin{lemma} 
	\begin{itemize}
		\item[(i)]
If	$Y \independent U \mid B^TU$, $Y \independent (Z,U) \mid C_z^TZ+C_u^TU$, and 
$C_z^TZ \independent U \mid C^TU$, where $B$, $C_z$, $C_u$, and $C$ all have the smallest possible column dimensions, then
		$\SS (B) \subseteq \SS (C_u)\oplus \SS (C)
	$.
		  If, in addition,  $C_z^T Z \independent U | B^T U$, then $\SS (C) \subseteq \SS (B)$.\vspace{-2mm}
		\item[(ii)]
		If $Y \independent U \mid B^TU$, $Y \independent  (Z, B^TU ) \mid D_z^TZ+D_u^TB^T U$, and 
		$D_z^TZ \independent B^TU \mid D^TB^TU$, where $B$, $D_z$, $D_u$, and $D$ all have the smallest possible column dimensions, then
	$\SS (B) 	=\SS (BD_{u}) \oplus \SS ( BD)$.  
\end{itemize}
\end{lemma}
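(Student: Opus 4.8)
The plan is to reduce both parts to the defining minimality property of the central subspace: whenever $Y \independent U \mid A^T U$ for some matrix $A$, the central subspace $\SS_{Y|U}$ is contained in $\SS(A)$. Thus for part (i) it suffices to exhibit one conditioning set, assembled from $C_u$ and $C$, under which $Y$ and $U$ become conditionally independent.

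First I would show $Y \independent U \mid (C_u^T U, C^T U)$, working with conditional distributions rather than conditional means. The hypothesis $Y \independent (Z,U) \mid C_z^T Z + C_u^T U$ gives, for every measurable set $A$, that $P(Y \in A \mid Z, U = u) = h_A(C_z^T Z + C_u^T u)$ for some function $h_A$. Applying the tower property underlying (\ref{identity}) to the indicator of $\{Y \in A\}$ then yields $P(Y \in A \mid U = u) = E\{ h_A(C_z^T Z + C_u^T u) \mid U = u\}$, an expectation over the conditional law of $C_z^T Z$ given $U = u$. The hypothesis $C_z^T Z \independent U \mid C^T U$ makes that conditional law depend on $u$ only through $C^T u$, so $P(Y \in A \mid U = u)$ is a function of $(C_u^T u, C^T u)$ for every $A$. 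Hence $Y \independent U \mid (C_u^T U, C^T U)$, and minimality of $B$ gives $\SS(B) = \SS_{Y|U} \subseteq \SS(C_u) \oplus \SS(C)$. The second claim of (i) is then immediate: $\SS(C)$ is by definition the central subspace for $C_z^T Z$ given $U$, so the extra hypothesis $C_z^T Z \independent U \mid B^T U$ makes $B^T U$ a sufficient reduction for $C_z^T Z$ given $U$, whence $\SS(C) \subseteq \SS(B)$.

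For part (ii) I would rerun the argument of (i) with $U$ replaced throughout by $V := B^T U$, $C_u$ by $D_u$, and $C$ by $D$: the second and third hypotheses of (ii) are exactly the second and third hypotheses of (i) read in the reduced covariate $V$, and by the argument above they deliver $\SS_{Y|V} \subseteq \SS(D_u) \oplus \SS(D)$ as subspaces of $\mathbb{R}^{d_0}$. The first hypothesis $Y \independent U \mid B^T U$ supplies the crucial extra ingredient, namely that $V$ admits no nontrivial further reduction, i.e. $\SS_{Y|V} = \mathbb{R}^{d_0}$. To see this, suppose $Y \independent V \mid A^T V$ for some $A$ with fewer than $d_0$ columns; since $Y \independent U \mid B^T U$ forces $P(Y \in \cdot \mid U)$ to factor through $V = B^T U$ and then through $A^T V = (BA)^T U$, one obtains $Y \independent U \mid (BA)^T U$ with $\SS(BA)$ of dimension strictly below $d_0$, contradicting the minimality of $B$. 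Therefore $\SS(D_u) \oplus \SS(D) = \mathbb{R}^{d_0}$. Because $B$ has full column rank it is injective, so applying it to this identity gives $\SS(BD_u) \oplus \SS(BD) = B\,\{ \SS(D_u) \oplus \SS(D) \} = B\,\mathbb{R}^{d_0} = \SS(B)$, while $\SS(BD_u) \oplus \SS(BD) \subseteq \SS(B)$ is trivial, yielding the stated equality.

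I expect the main obstacle to be the composition step in part (ii): one must check carefully that chaining $Y \independent U \mid B^T U$ with a further reduction of $V = B^T U$ again yields a genuine sufficient reduction of $U$, of the composed form $(BA)^T U$, so that minimality of $B$ can be invoked to force $\SS_{Y|V} = \mathbb{R}^{d_0}$. A related subtlety, already present in part (i), is that all reductions must be handled at the level of conditional distributions, not merely conditional expectations, for the central-subspace conclusions to be valid; the closing linear algebra, that an injective $B$ transports the subspace identity in $\mathbb{R}^{d_0}$ into the claimed identity inside $\SS(B)$, is routine.
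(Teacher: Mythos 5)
Your proof is correct, but it reaches the key containment by a different route than the paper. For the main inclusion $\SS(B) \subseteq \SS(C_u)\oplus\SS(C)$ in part (i), the paper does not argue from first principles: it passes through the partial central subspace $\SS_{Y|U}^{C_z^TZ}$ of \citet{chiaromonte_cook_li_2002}, noting $\SS_{Y|U}^{C_z^TZ}\subseteq\SS(C_u)$, and then invokes Proposition 3.1 and equation (3.1) of \citet{hung_liu_lu_2015} (the $Z$-envelope decomposition) to obtain $\SS(B)\subseteq \SS_{Y|U}^{C_z^TZ}\oplus\SS(C)\subseteq\SS(C_u)\oplus\SS(C)$. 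You instead prove the needed fact directly: the tower-property computation showing that $P(Y\in A\mid U=u)$ is a function of $(C_u^Tu, C^Tu)$, hence $Y\independent U\mid (C_u^TU, C^TU)$, which is in effect a self-contained re-derivation of the envelope result in this special case; after that, only the defining minimality of the central subspace is needed. The second claim of (i) is handled identically in both proofs. In part (ii), the paper again appeals to the same cited results, writing $\SS(I_{d})\subseteq \SS_{Y|B^TU}^{D_z^TZ}\oplus\SS(D)$, which tacitly uses that the central subspace of $Y$ given $B^TU$ is all of $\mathbb{R}^{d_0}$; your contradiction argument (a further reduction $A^TB^TU$ would give $Y\independent U\mid (BA)^TU$ with fewer than $d_0$ directions, violating minimality of $B$) makes this implicit step explicit, and your final linear-algebra step transporting the identity through the injective map $B$ matches the paper's conclusion $\SS(B)=\SS(BD_u)\oplus\SS(BD)$. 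The trade-off: the paper's proof is shorter but rests on external SDR machinery; yours is longer but elementary, self-contained at the level of conditional distributions, and closes a small gap the paper leaves to the reader.
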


The result in Lemma 1(i) says that $C$ may 
contain column vectors  that do not belong to $\SS (B) $, unless 
 $C_z^T Z \independent U | B^T U$. In general, $\SS (C)$ and $\SS (B)$ may not have any relationship so that $\hat\psi_{d_1}$ may be more or less efficient than $\hat\psi_{d_0}$. On the other hand, Lemma 1(ii) indicates that  
$D_u$ and $D$ used in two steps  together support the central subspace $\SS (B)$, and $\hat\psi_{d_2}$ is much more efficient than $\hat\psi_{d_0}$ if 
$\SS (BD)$ is truly contained in $\SS (B)$, otherwise 
$\hat\psi_{d_2}$ and $\hat\psi_{d_0}$ have the same convergence rate.
 Thus, $\hat\psi_{d_2}$ is guaranteed to be more efficient than or as good as $\hat\psi_{d_0}$ in terms of convergence rate. Regarding $\hat\psi_{d_1}$ and $\hat\psi_{d_2}$, there is no definite conclusion about their relative efficiency, since $\SS (C)$ and $\SS (D)$ do not have relationship. 

Three examples are provided next for illustration
on why and when the proposed estimator $\hat\psi_{d_1}$ or $\hat\psi_{d_2}$ is better than other estimators.
\vspace{2mm}

\noindent
{\bf Example 1}. 
Suppose that $U$ consists of 4  components
$ u_1,u_2,u_3$, and $u_4$, and that random variables $ u_1,u_2,u_3, u_4$, $\eta_1, \eta_2, \eta_3$, and $\epsilon$ are mutually independent, and $E(\epsilon )=0$. Assume also that     
$Z$ has 3 components,  
$			z_1= \abs{u_1-u_2}+\eta_1$, 
$			z_2=u_2+\eta_2$, and 
$			z_3=u_1+\eta_3 $, and that 
\[ 			Y=(z_1+7u_4)(u_3^2+1)^{-1}+\epsilon 
			  = (\abs{u_1-u_2}+\eta_1+7u_4)(u_3^2+1)^{-1}+\epsilon. \] 
		A straightforward calculation gives 
	\[ B = \left( \begin{array}{rrr}
	1 & 0 & 0 \\
	-1 & 0 & 0 \\
	0 & 1 & 0 \\
	0 & 0 & 1 \end{array} \right) \ \
	C_z =D_z = \left(  \begin{array}{rr} 
	1 & 0 \\
	0 & 0 \\
	0 & 0 \end{array} \right) \ \
	C_u = BD_u = \left(  \begin{array}{rr} 
	0 & 0 \\
	0 & 0 \\
	0 & 1 \\
	7 & 0  \end{array} \right) \ \
	C= BD = \left(  \begin{array}{r}	
	1 \\
	-1 \\
	0 \\
	0 \end{array} \right) \]

In this example, $p=4$, $d_0=3$, and $d_1 = d_2 = 1$. 
When $Z$ is not considered, $\hat\psi_{d_0}$ improves $\hat\psi_p$ since $d_0<p$. 
$\hat\psi_{d_1}$ and $\hat\psi_{d_2}$ are identical in this example, and they 
are  more efficient than $\hat\psi_{d_0}$ since $d_1=d_2=1 < d_0=3$. 

Here is an explanation on why our method improves $\hat\psi_{d_0}$ in this particular example. 
In the first step of estimating the inner expectation in (\ref{identity}),  
$Y$ is found to be related with two variables $z_1+7u_4$ and $u_3$;
in the second step of estimating the outer expectation in (\ref{identity}), 
$C_z^TZ = D_z^TZ=z_1$ is found to be related with one variable $u_1-u_2$. 
Thus, our approach  ``splits'' the original task of estimating $E(Y|U)$ with three variables into 
two tasks, estimating the inner expectation with two variables and estimating the outer expectation with one variable. It is shown in Section 3 that the convergence rate of $\hat\psi_{d_1}$ or 
$\hat\psi_{d_2}$ depends on the kernel estimation of the outer expectation and, consequently, this split produces an estimator with a faster convergence rate.  

It is also interesting to notice that $\SS (B)=\SS (C_u)\oplus \SS (C)$ in this example, i.e.,  the existence of $Z$  splits $\SS (B)$ into two orthogonal spaces and $Z$ does not bring in any unwanted information outside of $\SS (B)$. 
\vspace{2mm}

\noindent{\bf Example 2}.
Consider the same setting as in Example 1 except that
$	z_1=-5(u_2-\eta_1)+0.1u_1u_3$, 
$	z_2=0.5|u_1|+\eta_2$, 
$	z_3=-3(|u_2|-\eta_3)$, and 
$$	Y =  z_1-0.1u_1u_3-3z_3+|u_4| + 0.5 \epsilon=-5(u_2-\eta_1)+9(|u_2|-\eta_3)+|u_4|+0.5\epsilon.$$ 
Then
	\[ B = \left( \begin{array}{rrr}
0 & 0  \\
1 & 0  \\
0 & 0  \\
0 & 1  \end{array} \right) \quad 
C_z = \left(  \begin{array}{rrrr} 
1 & 0 & 0 & 0 \\
0 & 0 & 0 & 0 \\
-3 & 0 & 0 & 0 \end{array} \right) \quad
C_u =  \left(  \begin{array}{rrrr} 
0 &  1 & 0 & 0 \\
0 & 0 & 0 & 0\\
0 & 0 & 1 & 0 \\
0 & 0 & 0 & 1  \end{array} \right) \quad
C= \left(  \begin{array}{rrr}	
0 & 1 & 0 \\ 
1 & 0 & 0 \\
0 & 0 & 1 \\
0 & 0 & 0 \end{array} \right) \]
Without using $Z$, $\hat\psi_{d_0}$ involves $u_2$ and $u_4$ 
and $d_0=2$. Note that $u_1u_3$ is useful for $Z$ but not for $Y$. 
As the variable $u_1u_3$ outside the space of $\SS (B)$ is redundantly brought into the estimation,  $\hat\psi_{d_1}$ uses four variables $z_1-3z_3$, $u_1$, $u_3$, and $u_4$ in the first step and three variables 
$u_1$, $u_2$, and $u_3$ in the second step. Thus,  
 $d_1 = 3$ and $\hat\psi_{d_1}$
 is even less efficient than $\hat\psi_{d_0}$.  

On the other hand, the construction of $\hat\psi_{d_2}$ starts with  
 $(Y,Z,B^TU)$ so that  $u_1u_3$ is never in the picture, since $u_1u_3$ is independent of $(u_2,u_4)$. Note that 
 \[ D_z = \left(  \begin{array}{rrrr} 
 1  & 0  \\
 0 & 0 \\
 -3 & 0 \end{array} \right) \quad\quad
 D_u =  \left(  \begin{array}{rrrr} 
 0 & 0 \\
  0 & 1  \end{array} \right) \quad\quad
  D= 
  \left(  \begin{array}{rrr}	
  1 \\
  0 \end{array} \right) \quad\quad 
BD =  \left(  \begin{array}{rrr}	
 0 \\ 
 1 \\
 0 \\
 0 \end{array} \right) \]
because  $D_z^TZ=z_1-3z_3=-5u_2+0.1u_1u_3+9|u_2|+5\eta_1-9\eta_3$ is only related to $u_2$ in $B^TU= (u_2,u_4)^T$, not $u_4$, in the second step of the estimation, and $u_1u_3$ is independent of $(u_2,u_4)$. Hence,  ${d}_2=1<d_0=2<d_1=3$, and estimator $\hat\psi_{d_2}$ outperforms all other estimators in this example.
Again, $\hat\psi_{d_2}$ splits the task of estimating $E(Y|U)$ into two steps, with two variables $z_1-3z_3$ and $u_4$ in the first step and one variable $u_2$ in the second step. 
\vspace{2mm}

\noindent
{\bf Example 3}. 
This is an example in which $\hat\psi_{d_1}$ beats  $\hat\psi_{d_2}$ and $\hat\psi_{d_0}$. Consider the same setting as in Example 1 except that 
$	z_1= 2(-u_1+u_4)+\eta_1$, 
$	z_2=(-u_1+u_4)+\eta_2$, 
$	z_3=u_4+\eta_3$, and 
$$	Y  = z_1-z_2+u_1+u_2+(u_1-u_3)^2 + \epsilon =u_2+u_4+(u_1-u_3)^2+\eta_1-\eta_2+\epsilon$$
In the first step of $\hat\psi_{d_1}$, $Y$ is related with two variables 
$z_1-z_2+u_1+u_2$ and $u_1-u_3$, and in the second step, 
$C_z^TZ=z_1-z_2$ is a function of one variable $-u_1+u_4$. Hence 
	\[ B = \left( \begin{array}{rrr}
0 & 1  \\
1 & 0  \\
0 & -1  \\
1 & 0  \end{array} \right) \quad \quad
C_z = \left(  \begin{array}{rrrr} 
1 & 0 \\
-1  & 0 \\
0  & 0 \end{array} \right) \quad\quad
C_u =  \left(  \begin{array}{rrrr} 
1 & 1   \\
1 & 0 \\
0 & -1  \\
0 & 0   \end{array} \right) \quad\quad
C= \left(  \begin{array}{rrr}	
-1 \\ 
0 \\
0 \\
1 \end{array} \right) \]
and    $d_1=1<d_0=2$. 

For $\hat\psi_{d_2}$, we search directions in $(Z,B^TU)$ in the first step.
Note that $Y$ is related to  $z_1-z_2+u_1+u_2$ and $u_1-u_3$.
Although $u_1-u_3$ is exactly the second component of $B^TU$, one cannot express  $z_1-z_2+u_1+u_2$ as a linear function of $Z$ and $B^TU$. Hence, 
	\[ 
D_z = \left(  \begin{array}{rrrr} 
1 & 0 & 0\\
-1  & 0 & 0 \\
0  & 0 & 0 \end{array} \right) \quad\quad
D_u =  \left(  \begin{array}{rrrr} 
0 & 1 & 0  \\
0 & 0 & 1   \end{array} \right) \]
In the second step, $D_z^TZ=z_1-z_2=-u_1+u_4+\eta_1-\eta_2$. 
Still, one cannot find a linear function of $B^TU$ to 
 represent the vector $(-1 \  0 \ 0 \ 1)^T$ related to $-u_1+u_4$. Thus, 
we have to use both column vectors of $B$ despite $-u_1+u_4$ is one dimensional.  
Then $D$ is the identity of order 2 and $BD=B$, 
i.e., in the second step we cannot reduce the dimension of $B^TU$ any more. 
Since $d_2=2>d_1=1$, $\hat\psi_{d_1}$  turns out to be better than $\hat\psi_{d_2}$, and $\hat\psi_{d_2}$ has the same convergence rate as $\hat\psi_{d_0}$. 
 
 This example shows that restricting to $B^TU$ may prevent us to find the best direction in $B^TU$ for the outer expectation estimation, although it guarantees that at least we use $B^TU$ so that the resulting estimator is at least as good as $\hat\psi_{d_0}$. 
 
 \section{Asymptotic Properties}

%
This section is dedicated to the asymptotic properties of the estimators of $\psi (u_0)$ formulated in Section 2. 
It is clear that the asymptotic properties of $\hat\psi_{d_0}(u_0)$, $\hat\psi_{d_1}(u_0)$, and $\hat\psi_{d_2}(u_0)$ depend on the asymptotic convergence rates of the SDR estimators $\hat{B}$, $\hat{C}_z$, 
$\hat{C}_u$, $\hat{C}$,  $\hat{D}_z$, 
$\hat{D}_u$, and $\hat{D}$  in (\ref{esti2})-(\ref{esti4}). Under reasonable conditions, it is  proved in \citet{ma_zhu_2012} that 
 SDR estimators converge at the rate $n^{-1/2}$, which is assumed throughout this paper. 

In the beginning of Section 2, we introduced  a generic notation $K_h$ for a kernel with  bandwidth $h$. 
In what follows  $K_h$ is chosen to be a product kernel of dimension $s$ and  order $m \geq 2$  in the sense that  
$K_{h}(x)=h^{-s} \prod_{j=1}^{s}\kappa(x_j/h)$, where  $x_j$ is the $j$th component of the $s$-dimensional $x$  and  $\kappa( \cdot )$ is a bounded and  Lipschitz continuous univariate kernel having a compact support  and satisfying $\int \kappa (t) dt =1$,  
 $\int t^m \kappa (t) dt $ is finite and nonzero, and 
 $\int t^l \kappa (t) dt = 0$ for all $0< l <m$. 
 
For $\hat{\psi}_{d_1}$, let    $V=C_z^TZ+C_u^TU$,  
$\gamma(v)
$ be the two dimensional vector whose components are  
$\gamma_1(v)= f_V(v)$ and 
$\gamma_2(v)= E(Y|V=v)f_V(v)$, and 
$\hat{\gamma}(v)$  be the two dimensional vector whose components are 
$\hat{\gamma}_1(v)= n^{-1}\sum_{j=1}^n K_h(C_z^TZ_j+C_u^TU_j - v)$ and 
$\hat{\gamma}_2(v) =  n^{-1} \sum_{j=1}^n Y_jK_h(C_z^TZ_j+C_u^TU_j -v)$. 
For $\hat{\psi}_{d_2}$, $V$ takes the form $V=D_z^T Z+D_u^T B^T U$ and 
$\hat\gamma_i(v)$ is defined with $C_z^TZ_j+C_u^TU_j$ replaced by 
$D_z^TZ_j+D_u^TB^TU_j$. 



Throughout, we use $f_X( \cdot )$ to denote the probability density of a random vector $X$. 

\begin{assumption}\label{bound} 
	The density $f_{V}$ is bounded below from zero, i.e., there is a constant $ c>0$ such that  $\inf_{v} f_{V}(v) \geq c$.
\end{assumption}

We state the following assumptions  for $\hat\psi_{d_1}$. For $\hat\psi_{d_2}$, the assumptions should be modified as in the statement of Theorem 1. 
To simplify expressions in assumptions and theorem, for both $\hat{\psi}_{d_1}$ and $\hat{\psi}_{d_2}$, we use the same notation
 $\dbar$,  $\hbar$ and $\mbar$ to denote the dimension, bandwidth, and order of the kernel used in the 
  inner expectation estimation, and $\stwo{d}$, $\stwo{h}$ and $\stwo{m}$ to  denote the  dimension, bandwidth, and order of the kernel  used in the outer expectation estimation.

\begin{assumption}\label{UnifConv}

The function  
$\gamma (v)$ has bounded  $\mbar$th derivative. The kernel bandwidth $\hbar$ of the first step is  of the order $n^{-\bar s}$ and there exists a $q >1$  such that 
$\bar s<(1-q^{-1})\dbar^{-1}$ and the function $E[ (1+Y^2)^{q}|V=v]$ is bounded.
\end{assumption}

Assumptions \ref{bound} and  \ref{UnifConv}  are similar to those in 
	 \citet{newey_1994} and  \citet{hansen_2008}, which 
ensure the convergence of $\hat{\gamma}(v)$ to $\gamma(v)$ uniformly in $v$. Specifically, $\norm{\hat{\gamma}(v)-E[\hat{\gamma}(v)]}_{\infty}=O_{p}(a_{n})$ and
	$\norm{E[\hat{\gamma}(v)]-\gamma (v)}_{\infty}=O_{p}(\hbar^{\mbar})$, where 
$a_{n}=(\log n / n \hbar^{\dbar})^{1/2}$ and $\| \cdot \|_\infty$ is the sup-norm; hence, 
$\norm{\hat{\gamma}(v)-\gamma(v)}_{\infty} = O_{p}(a_{n}+\hbar^{\mbar})$. 

\begin{assumption}\label{main}
Let  $\tilde{V}= \tilde{C}_z^{T}Z+\tilde{C}_u^{T}U$, and $\Omega=\{(\hat{\gamma},\tilde{C}_{zu},\tilde{C}): \norm{\hat{\gamma}(\tilde{v})-\gamma(v)}_{\infty}\leq \epsilon, \norm{\tilde{C}_{zu}-C_{zu}}\leq cn^{-1/2},\norm{\tilde{C}-C}\leq cn^{-1/2}\}$ for some positive constants $c$ and  $\epsilon>0$. 
\begin{enumerate}[label=(\roman*)]

\item\label{MaZhu1}
Uniformly on $\Omega$, the $\mbar$th derivative of $f_{\tilde{V}}(\tilde{v})$  and
$E(Y|\tilde{V}=\tilde{v})f_{\tilde{V}}(\tilde{v})$
are Lipschitz-continuous functions of $\tilde{v}$, 
 the $\stwo{m}$th derivatives of  
 $E[\hat{\gamma}_2(\tilde{V})/\hat{\gamma}_1(\tilde{V})|\tilde{C}^{T}U=t]f_{\tilde{C}^T U}(t)$ and $f_{\tilde{C}^T U}(t)$
 are Lipschitz-continuous functions of $t$,  
and 
$E[\hat{\gamma}_2^2(\tilde{V})/\hat{\gamma}_1^2(\tilde{V})|\tilde{C}^TU=
t ]$ and $E(Y^2|\tilde{V}=\tilde{v} )$ are  Lipschitz-continuous as functions of $t$ and $\tilde{v}$  respectively. 

\item\label{MaZhu2}
Uniformly on $\Omega$, $E(Y^2|\tilde{V}=$ $\tilde{C}_{z}^{T}z+\tilde{C}_{u}^{T}u)$ 
and $E[\hat{\gamma}_2^2(\tilde{V})/\hat{\gamma}_1^2(\tilde{V}) | \tilde{C}^{T}U=\tilde{C}^{T}u]$
 are bounded.
 
 \item\label{SDRneg}
Uniformly on $\Omega$, 
$E[\hat{\gamma}_2( \tilde{V})/\hat{\gamma}_1( \tilde{V}) | \tilde{C}^{T}U=\tilde{C}^{T}u_0]f_{\tilde{C}^{T}U}(\tilde{C}^{T}u_0)$ and 
$f_{\tilde{C}^{T}U}(\tilde{C}^{T}u_0)$ are  Lipschitz continuous  functions of $\tilde{C}$, and  
$E[ \norm{\partial E(Y|\tilde{C}_z^{T}Z+\tilde{C}_u^{T}u_0)/\partial \tilde{C}_{zu}} | C^TU=C^Tu]$ is bounded. 

\item\label{VStat}
$E(\abs{Y}|C_u^{T} U=C_u^{T} u, C^{T} U=C^{T} u)$ 
is bounded.

\item\label{AsympNorm}

The function
$E\left[E(Y|V=C_z^T Z+C_u^T u_0)\mid C^T U=C^T u\right]$ 
is $\stwo{m}$th order continuously differentiable
and the function $ E\left\{[ E(Y|V=C_z^T Z + C_u^T u_0)]^2 \mid C^T U=C^T u\right\}
$ is continuous. 
\end{enumerate}
\end{assumption}

\begin{assumption}\label{par}

The kernel bandwidth $\hbar$ of the first step is of the order $n^{-\bar s}$
with $\bar s$ satisfying  $\mbar^{-1}\stwo{m}(2\stwo{m}+\stwo{d})^{-1}<\bar{s}< \min\{(\stwo{m}+\stwo{d})(2\stwo{m}+\stwo{d})^{-1}(1+\dbar)^{-1}, \stwo{m}(2\stwo{m}+\stwo{d})^{-1}\dbar^{-1}\}$;  the kernel bandwidth in the second step is $h=\lambda^{2/(2\stwo{m}+\stwo{d})} n^{-1/(2\stwo{m}+\stwo{d})}$ with $2\stwo{m}>\stwo{d}$ and a constant $\lambda >0$.  
\end{assumption}

Assumptions \ref{UnifConv} and \ref{par} impose some constraints on the orders and bandwidths of the kernels in our two step estimation. 

Assumptions \ref{bound}, \ref{main}\ref{MaZhu1}-\ref{SDRneg} and \ref{par}
are assumed to 
ensure that the estimation errors of SDR estimators $(\hat{C}_{zu}, \hat{C})$ are asymptotically negligible \citep{ma_zhu_2012}. The proof can be found in Lemma 
3 in the Appendix. 
Assumptions \ref{main}\ref{AsympNorm} and \ref{par} are standard for the asymptotic normality of nonparametric kernel estimator \citep{bierens}.

The following result establishes the asymptotic normality as well as the convergence rates of $\hat\psi_{d_1} (u_0)$ and $\hat\psi_{d_2} (u_0)$ defined in (\ref{esti31})-(\ref{esti4}) with a fixed $u_0$. 
\begin{theorem} \label{thmasymp}

(i) If Assumptions \ref{bound}-\ref{par} hold with $d=d_1$, then there exists a function $b(C^T u)$ such that 
\begin{equation}\label{esti3result}\begin{split}
&n^{m/(2m+d_1)} \left[\hat{\psi}_{d_1}(u_0)-\psi(u_0)\right]\Longrightarrow  N\left(\dfrac{\lambda b(C^T u_0)}{f_{C^T U}(C^T u_0)},\dfrac{g(C^T u_0)}{f_{C^T U}(C^T u_0)}\int K^2(t)\diff t \right)
\end{split}\end{equation}
where $\lambda$ is given in the bandwidth (Assumption \ref{par}), $\Longrightarrow$ denotes convergence in distribution,  and 
$g(C^T u)={\rm Var} \left\{ E(Y|V=C_z^T Z + C_u^T u_0)
\mid  C^T U=C^T u\right\}$.\\
 (ii)  If Assumptions \ref{bound}-\ref{par} hold with $d=d_2$ and $C_z$, $C_u$, and $C$  replaced by $D_z$, $BD_u$, and $BD$, respectively, then 
  (\ref{esti3result}) holds with $d_1$, 
  $C_z$, $C_u$, and $C$  replaced by $d_2$, $D_z$, $BD_u$, and $BD$, respectively.   
\end{theorem}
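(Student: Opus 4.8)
The plan is to prove part (i) in full; part (ii) then follows verbatim after relabelling $(C_z,C_u,C,d_1)\mapsto(D_z,BD_u,BD,d_2)$ and $V=C_z^TZ+C_u^TU\mapsto V=D_z^TZ+D_u^TB^TU$, since $\hat\psi_{d_2}$ has exactly the same two--step ratio form and the required distributional and independence conditions carry over through Lemma~1(ii). Write $V=C_z^TZ+C_u^TU$ and let $\varphi(z,u_0)=E(Y\mid V=C_z^Tz+C_u^Tu_0)$ be the population inner regression function. By the identity (\ref{identity}) together with $C_z^TZ\independent U\mid C^TU$, the target satisfies $\psi(u_0)=E[\varphi(Z,u_0)\mid C^TU=C^Tu_0]$, so the outer step is a Nadaraya--Watson regression of the regressand $\varphi(Z_i,u_0)$ on $C^TU_i$ evaluated at $C^Tu_0$.

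First I would remove the SDR estimation error, i.e.\ show that replacing $(\hat C_z,\hat C_u,\hat C)$ by the true $(C_z,C_u,C)$ in (\ref{esti31})--(\ref{esti3}) perturbs $\hat\psi_{d_1}(u_0)$ only by $o_p(n^{-m/(2m+d_1)})$. On the set $\Omega$ of Assumption~\ref{main}, which has probability tending to one because the SDR estimators converge at rate $n^{-1/2}$ \citep{ma_zhu_2012}, the Lipschitz and differentiability conditions \ref{main}\ref{MaZhu1}--\ref{SDRneg} permit a first--order expansion of both the inner and the outer smoother in the direction parameters around their truth; the bounded--derivative conditions bound the increment by $O_p(\|\hat C_{zu}-C_{zu}\|+\|\hat C-C\|)=O_p(n^{-1/2})$. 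Since $m/(2m+d_1)<\tfrac12$, this is negligible. This is the content invoked as Lemma~3 in the Appendix, and from here on I work with the infeasible estimator built from the true directions.

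With the true directions in place, I would write $\hat\psi_{d_1}(u_0)-\psi(u_0)$ as a ratio whose denominator $n^{-1}\sum_iK_h(C^TU_i-C^Tu_0)$ converges to $f_{C^TU}(C^Tu_0)>0$ (Assumption~\ref{bound}), so by Slutsky it suffices to treat the numerator $n^{-1}\sum_i[\hat\varphi_1(Z_i,u_0)-\psi(u_0)]K_h(C^TU_i-C^Tu_0)$. I split it into an \emph{oracle} part, with $\hat\varphi_1(Z_i,u_0)$ replaced by $\varphi(Z_i,u_0)$, and a \emph{remainder} carrying the first--step error $\hat\varphi_1(Z_i,u_0)-\varphi(Z_i,u_0)$. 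The oracle part is the numerator of a standard kernel regression of the i.i.d.\ regressand $\varphi(Z_i,u_0)$, whose conditional mean at $C^Tu_0$ is $\psi(u_0)$ and whose conditional variance is $g(C^Tu_0)=\mathrm{Var}\{\varphi(Z,u_0)\mid C^TU=C^Tu_0\}$; crucially the inner function carries no measurement noise, so no $Y-\varphi$ term enters here. Under Assumptions~\ref{main}\ref{AsympNorm} and \ref{par}, in particular the bandwidth $h=\lambda^{2/(2m+d_1)}n^{-1/(2m+d_1)}$, the classical kernel--regression central limit theorem \citep{bierens} then delivers convergence at rate $n^{-m/(2m+d_1)}$ to the normal law of (\ref{esti3result}), with $m$th--order bias $\lambda b(C^Tu_0)$ and variance $g(C^Tu_0)\int K^2/f_{C^TU}(C^Tu_0)$.

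The main obstacle is the remainder $R_n=n^{-1}\sum_i[\hat\varphi_1(Z_i,u_0)-\varphi(Z_i,u_0)]K_h(C^TU_i-C^Tu_0)$, which must be shown to be $o_p(n^{-m/(2m+d_1)})$. Using the uniform expansion $\|\hat\gamma-\gamma\|_\infty=O_p(\hbar^{\mbar}+a_n)$ with $a_n=(\log n/n\hbar^{\dbar})^{1/2}$ from Assumption~\ref{UnifConv}, I would separate $\hat\varphi_1-\varphi$ into a conditional--bias part and a stochastic part. The bias part contributes $O_p(\hbar^{\mbar})$, which is $o(n^{-m/(2m+d_1)})$ precisely by the lower bound $\mbar^{-1}m(2m+d_1)^{-1}<\bar s$ of Assumption~\ref{par}. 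The stochastic part is the delicate piece: a crude sup--norm bound $O_p(a_n)$ is not sharp enough, so instead I would recognise $R_n$ as a second--order V--statistic in the inner residuals $\xi_j=Y_j-E(Y_j\mid V_j)$, doubly smoothed by the inner kernel (bandwidth $\hbar$) and the outer kernel (bandwidth $h$). Because $\xi_j$ is centred given $V_j$ and the expected kernel weights are bounded (Assumptions~\ref{main}\ref{MaZhu2},\ref{VStat}), its H\'ajek projection onto single indices equals $n^{-1}\sum_j\xi_j\,c_j$ with $c_j$ bounded, hence is $O_p(n^{-1/2})=o(n^{-m/(2m+d_1)})$; the degenerate second--order component and the higher--order expansion remainders are dominated by the target rate exactly under the two upper bounds on $\bar s$ in Assumption~\ref{par}. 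Combining the four steps through Slutsky yields (\ref{esti3result}), and part (ii) follows under the stated substitutions. I expect matching these remainder orders to the admissible bandwidth window---rather than the oracle central limit theorem---to be where essentially all of the technical work resides.
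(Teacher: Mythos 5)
Your proposal is correct and follows essentially the same route as the paper: removal of the SDR estimation error first (the paper's Lemma \ref{lemSDRneg}, which you correctly identify), Slutsky via the denominator, the oracle kernel-regression CLT for the regressand $E(Y|V=C_z^TZ_i+C_u^Tu_0)$ via \citet{bierens}, and control of the first-step remainder through a bias/stochastic split with V-statistic arguments. Your ``bias part, H\'ajek projection $n^{-1}\sum_j\xi_j c_j$, degenerate second-order component'' decomposition is exactly the content of the paper's Lemmas \ref{lemnonlin}--\ref{lemlin} (linearization of the ratio $\hat\gamma_2/\hat\gamma_1$, the Newey--McFadden V-statistic bound, and the mean term $n^{-1}\sum_j S_n(Y_j,V_j)=O_p(n^{-1/2})$, where $S_n(Y_j,V_j)$ is precisely a bounded factor times $Y_j-E(Y_j|V_j)$), with the same use of the bandwidth window in Assumption \ref{par} to kill each piece.
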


The convergence rate $n^{-m/(2m+d)}$ shown in (\ref{esti3result}) is the optimal convergence rate for $\hat\psi_d(u_0)$, $d=d_1$ or $d_2$. 
Since $\lambda >0$, the asymptotic bias of $\hat\psi_d(u_0)$ has the same order as the asymptotic variance of $\hat\psi_d(u_0)$ and, hence, we should consider asymptotic mean squared error. 
If we choose the bandwidth $h$ to be 
$o(n^{-1/(2m+d)})$, then (\ref{esti3result}) holds with $\lambda$ replaced by 0,
but the convergence rate of the resulting estimator $\hat\psi_d(u_0)$ is slower than 
$n^{-m/(2m+d)}$.

Following  \citet{bierens}, $\hat{\psi}_{d_0}$ defined in (\ref{esti2}) is also  asymptotically normal  with convergence rate $n^{-\stwo{m}/(2\stwo{m}+d_0)}$ if we use the same kernel order $\stwo{m}$ as in the second step of $\hat\psi_{d_1}$ and $\hat\psi_{d_2}$. 
Together with Theorem 1, we conclude that the  convergence rate of  $\hat{\psi}_{d_j}(u_0)$ is $n^{-\stwo{m}/(2\stwo{m}+d_j)}$, $j=0,1,2$, 
 and we can compare  the three estimators by comparing $d_j$'s:  the higher the dimension $d_j$ of the covariate vector used in kernel estimation of the last step, the slower the  convergence rate.

By Theorem 1, the dimension $\dbar$ in $\hat\psi_{d_1}$ and $\hat\psi_{d_2}$ 
does not have a direct influence on the convergence rate. But  some conditions on $\dbar$, $\hbar$, and $\mbar$ in Assumptions \ref{UnifConv} and \ref{par} still need to be satisfied to guarantee the asymptotic normality of the estimator. A high order $\mbar$  may be needed when $\dbar$ is large. Similarly, a high order $\stwo{m}$ may be needed when $d_1$ or $d_2$ is large.


To end this section we provide a discussion on Assumption 1, which requires that 
the density $f_V(v)$ is  bounded away from zero. It is a technical condition, and is sufficient but not necessary, i.e., without Assumption 1, $\hat{\psi}_{d_1}$ and $\hat{\psi}_{d_2}$ may still perform well (see the simulation results in the next section). 
On the other hand, we may use the following transformation method when   Assumption 1 is a concern. Note that 
 $E(Y|V)=E(Y|\varphi (V))$ for any invertible function $\varphi $.
 Thus, we may use $\varphi (V)$ if  the density  $f_{\varphi (V)}$ is bounded away from 0.
 Let $\mu_V$ and $\Sigma_V$ be the mean vector and covariance matrix of $V$, respectively, and $V_S = \Sigma_V^{-1/2} (V-\mu_V)$ be the standardized $V$. 
 We consider transformation $\varphi (V) = ( \Psi_1(V_{S1}), ...,\Psi_{\dbar} (V_{S\dbar}) )^T$, where $V_{Sj}$ is the $j$th component of $V_S$ and $\Psi_j$ is a known distribution function. If $V$ is normally distributed, then a prefect choice of $\Psi_j$ is the cdf of standard 
 normal distribution. Otherwise, we choose $\Psi _j$ to be the empirical distribution based on the $n$ $V_{Sj}$ observations. Since $\mu_V$ and $\Sigma_V$ are unknown, they also have to be estimated using $V$ data. This method is examined in the simulation in Section 4. 

 \section{Simulation Studies}
 
 In this section, we present simulation results to evaluate the performance of $\hat{\psi}_p$, $\hat{\psi}_{d_0}$, $\hat{\psi}_{d_1}$, and $\hat{\psi}_{d_2}$, when the  sample size is $n=200$.
  The estimators are evaluated at $8$ different values of $E(Y|U=u_0)$ with randomly generated $u_0$'s. A second-order Epanechnikov kernel is adopted for all estimators. 
 For the first step of $\hat{\psi}_{d_1}$ and $\hat{\psi}_{d_2}$, bandwidth is denoted as $\hbar$. For estimator $\hat{\psi}_p$ and $\hat{\psi}_{d_0}$ and the second step of $\hat{\psi}_{d_1}$ and $\hat{\psi}_{d_2}$, bandwidths are of the form $h=C n^{-1/( 4+d)}$, where $d$ is $p$, $d_0$, $d_1$, or $d_2$. 
 The values of $\hbar$ and $C$  are selected using 10-fold cross-validation. 
 For SDR, the semiparametric principal Hessian directions method proposed by \citet{ma_zhu_2012} is used.
 
 
 In the first simulation study, we 
 examine the relative performance of different estimators in the ideal situation where we know the column dimensions of matrices $B$, $C$, $D$, $C_{zu}$, and $D_{zu}$.   The situation where these dimensions are unknown is considered in the third simulation study. 
 The following four settings are considered.
 \begin{itemize}
 	\item[(A1)]
 	The model is given by Example 1, where four components of $U$ are uniformly distributed with lower bounds  $-1,-3,-10,$ and $8$, and upper bounds $7,-1,-2$, and $18$;  
 	 three components of $\eta$ are uniformly distributed with lower bounds $0,0$, and 0 and upper bounds $2,3$, and $5$; and  $\epsilon \sim N(0,1)$.
 	\item[(A2)]
 	The model is given by Example 2, where four components of $U$ are uniformly distributed with lower bounds  $3, 0,-5,$ and $8$, and upper bounds $5,9,-2$, and $18$;  
 	three components of $\eta$ are uniformly distributed with lower bounds $0,0$, and 0 and upper bounds $7,3$, and $5$; and  $\epsilon \sim N(0,1)$.
 	 	\item[(A3)]
 	The mode is given by  Example 3, where four components of $U$ are uniformly distributed with lower bounds  $-1,-3,-10,$ and $2$, and upper bounds $3,-1,-2$, and $4$;  
 	three components of $\eta$ are uniformly distributed with lower bounds $0,0$, and 0 and upper bounds $7,3$, and $5$; and  $\epsilon \sim N(0,1)$.
 \end{itemize}
 
 We may replace $\hat{B}$ in $\hat\psi_{d_0}$ 
 by the proposed SDR method in \citet{hung_liu_lu_2015} 
 that  first constructs a $Z$-envelope $ \supseteq {\cal S}_{Y|U} = \SS(B)$ and 
 then estimates $B$ in the $Z$-envelope. Let $\widetilde{\psi}_{d_0}$ denote 
 the resulting estimator of $\psi(u_0)$.  Although $\widetilde{\psi}_{d_0}$
 does not improve 
 $\hat{\psi}_{d_0}$ in terms of convergence rate, $\widetilde{\psi}_{d_0}$ may have a better finite sample performance than $\hat{\psi}_{d_0}$ because the auxiliary $Z$ information is used in estimating $B$ through the $Z$-envelope. However, it can be shown that in settings (A1)-(A3), the $Z$-envelope is the whole space ${\cal R}^4$ and, hence, $\widetilde{\psi}_{d_0} = \hat{\psi}_{d_0}$ as discussed by 
 \citet{hung_liu_lu_2015}. To see whether  $\widetilde{\psi}_{d_0} $ improves $\hat{\psi}_{d_0}$, we consider another setting as follows:
 \begin{itemize}
 	\item[(A1$'$)]
 	The model, $Y$, and $U$ are the same as those in (A1), but $Z=z_1=\abs{u_1-u_2}+\eta_1$ is one dimensional.  
 	\end{itemize}
   Under setting (A1$'$), the $Z$-envelope is 3-dimensional and 
 \[ {\cal S}_{Y|U}^Z = \left( \begin{array}{rr}
 0 & 0  \\
 0 & 0  \\
 1 & 0  \\
 0 & 1  \end{array} \right) \quad \quad
 {\cal S}_{Z|U} = \left(  \begin{array}{rr} 
 1  \\
 -1 \\
 0 \\
 0  \end{array} \right) \quad\quad
 Z\text{-envelope} =  \left(  \begin{array}{rrr} 
 1 & 0 & 0  \\
 -1 & 0 & 0 \\
 0 & 1 & 0\\
 0 & 0 & 1\end{array} \right) \quad\quad \]

 \begin{table}[h]
 	\caption{Absolute value of bias and  root mean squared error (RMSE) of different estimators at  eight values of  $E(Y|U=u_0)$ based on $n=200$ and 1,000 simulations.} \label{A200}
 	\centering
	\scalebox{0.92}{
 	\begin{tabular}{cccrrrrrrrr}
 		\hline
 		&& &\multicolumn{8}{c}{value of $E(Y|U=u_0)$} \\ 
 		setting &	quantity & method \hspace{3mm}  &  1.42 & 1.71 & 1.79 & 1.83 & 1.90 & 4.31 & 7.88 & 8.61 \\ 
 		\hline
 		(A1) &	|bias| & $\hat\psi_p$  & 0.12 & 0.31 & 0.11 & 0.14 & 0.11 & 0.86 & 0.88 & 0.45 \\ 
 		&& $\hat\psi_{d_0}$ &  0.04 & 0.22 & 0.00 & 0.05 & 0.01 & 0.45 & 0.80 & 0.16 \\ 
 		&& $\hat\psi_{d_1}$ & 0.00 & 0.15 & 0.11 & 0.01 & 0.10 & 0.09 & 0.56 & 0.02 \\ 
 		&& $\hat\psi_{d_2}$& 0.00 & 0.18 & 0.12 & 0.00 & 0.11 & 0.10 & 0.63 & 0.03 \\ \cline{4-11}
 & RMSE &  $\hat\psi_p$ & 0.36 & 0.68 & 0.44 & 0.47 & 0.41 & 1.13 & 1.84 & 1.34 \\ 
 		&& $\hat\psi_{d_0}$ &  0.33 & 0.64 & 0.44 & 0.44 & 0.40 & 0.72 & 1.51 & 1.17 \\ 
 		&& $\hat\psi_{d_1}$  & 0.22 & 0.33 & 0.26 & 0.25 & 0.24 & 0.30 & 0.93 & 0.54 \\ 
 		&& $\hat\psi_{d_2}$ & 0.25 & 0.34 & 0.28 & 0.29 & 0.26 & 0.28 & 0.90 & 0.55 \\ \cline{4-11}
 (A1$'$) &	|bias| & $\hat\psi_p$  & 0.13 & 0.28 & 0.11 & 0.15 & 0.11 & 0.88 & 0.88 & 0.46 \\ 
 	& & $\hat\psi_{d_0}$  & 0.04 & 0.21 & 0.00 & 0.04 & 0.01 & 0.45 & 0.87 & 0.14 \\ 
 	& &$\widetilde\psi_{d_0}$ & 0.03 & 0.23 & 0.02 & 0.03 & 0.01 & 0.43 & 0.75 & 0.08 \\ 
 	& &$\hat\psi_{d_1}$   & 0.02 & 0.14 & 0.09 & 0.02 & 0.08 & 0.13 & 0.59 & 0.07 \\ 
 	& &$\hat\psi_{d_2}$  & 0.00 & 0.15 & 0.08 & 0.01 & 0.06 & 0.13 & 0.59 & 0.15 \\ \cline{4-11}
 &RMSE & $\hat\psi_p$ & 0.36 & 0.66 & 0.45 & 0.46 & 0.42 & 1.14 & 1.84 & 1.34 \\ 
 	&& $\hat\psi_{d_0}$ &  0.34 & 0.63 & 0.41 & 0.44 & 0.39 & 0.75 & 1.63 & 1.17 \\ 
 	& &$\widetilde\psi_{d_0}$  &  0.34 & 0.61 & 0.39 & 0.48 & 0.39 & 0.74 & 1.54 & 1.22 \\ 
 	& &$\hat\psi_{d_1}$    & 0.24 & 0.40 & 0.29 & 0.31 & 0.28 & 0.34 & 0.95 & 0.60 \\ 
 	& &$\hat\psi_{d_2}$   & 0.26 & 0.42 & 0.32 & 0.33 & 0.30 & 0.36 & 0.95 & 0.67 \\ 
 	\cline{4-11}
		&& &\multicolumn{8}{c}{value of $E(Y|U=u_0)$} \\ 
	 &	 & & 23.57 & 26.83 & 28.40 & 29.74 & 35.00 & 36.91 & 39.46 & 41.47 \\ 
\cline{4-11}
(A2) &	|bias| & $\hat\psi_p$ & 0.25 & 0.67 & 1.19 & 0.03 & 1.28 & 2.15 & 3.03 & 3.68 \\ 
&& $\hat\psi_{d_0}$ & 0.57 & 0.00 & 0.75 & 0.31 & 0.25 & 0.30 & 1.34 & 1.99 \\ 
&& $\hat\psi_{d_1}$  & 1.30 & 1.24 & 2.43  & 2.32 & 2.65 & 2.77 & 4.38 & 5.01 \\ 
&& $\hat\psi_{d_2}$ & 0.25 & 0.23 & 0.65  & 0.50 & 0.63 & 0.82 & 1.49 & 2.23 \\ \cline{4-11}
& RMSE &  $\hat\psi_p$  & 3.67 & 4.00 & 4.12 & 4.63 & 3.57 & 3.50 & 4.81 & 5.43 \\ 
&& $\hat\psi_{d_0}$   & 3.28 & 3.42 & 3.61 &  3.88 & 3.03 & 2.64 & 3.51 & 4.03 \\ 
&& $\hat\psi_{d_1}$  & 3.97 & 4.03 & 4.96 & 5.30 & 4.33 & 4.08 & 5.92 & 6.53 \\ 
&& $\hat\psi_{d_2}$ & 2.96 & 3.09 & 3.14  & 3.38 & 2.81 & 2.58 & 3.28 & 3.66 \\ 
\cline{4-11}
&& &\multicolumn{8}{c}{value of $E(Y|U=u_0)$} \\ 
&	 &    & 28.22 & 28.91 & 67.44 & 82.46 & 86.44 & 118.5 & 137.7 & 138.8 \\ 
\cline{4-11}
(A3) &	|bias| & $\hat\psi_p$ & 2.71 & 2.14 & 1.38 & 6.42 & 1.30 & 12.50 & 14.11 & 15.85 \\ 
&& $\hat\psi_{d_0}$ & 1.17 & 0.74 & 0.35 & 2.85 & 0.48 & 5.99 & 7.34 & 8.43 \\ 
&& $\hat\psi_{d_1}$  & 1.17 & 0.44 & 0.30 & 2.84 & 0.45 & 5.80 & 7.72 & 9.24 \\ 
&&$\hat\psi_{d_2}$ & 1.60 & 1.02 & 0.22 & 3.40 & 0.81 & 7.60 & 9.83 & 10.67 \\ \cline{4-11}
&RMSE &  $\hat\psi_p$  & 4.68 & 4.81 & 5.42 & 9.75 & 7.60 & 14.97 & 15.82 & 17.83 \\ 
&& $\hat\psi_{d_0}$	 & 2.14 & 2.21 & 2.69 & 7.83 & 3.16 & 8.16 & 9.49 & 10.72 \\ 
&& $\hat\psi_{d_1}$  & 1.83 & 1.67 & 1.82 & 6.02 & 2.67 & 7.88 & 9.59 & 12.01 \\ 
&&$\hat\psi_{d_2}$ & 2.46 & 2.24 & 2.41 & 8.95 & 2.99 & 10.06 & 12.15 & 13.07 \\ 
\hline
 	\end{tabular}}
 \end{table}
 
 Simulation results with 1,000 runs under these four settings are reported in 
 Table \ref{A200}, which contains the 
 absolute value of bias and root-mean-squared error (RMSE) of different estimators.
 It can be seen from  Table \ref{A200}  that the performance of various estimators supports  our theory.
 In summary, the results in Table \ref{A200} indicate that,  in terms of  RMSE, (i) $\hat\psi_{d_0}$ is better than $\hat\psi_p$ due to dimension 
 reduction from $p $ to $d_0$; (ii) in setting (A1), $d_1 <d_0$ and $d_2 < d_0$, and our proposed estimators $\hat\psi_{d_1}$ and $\hat\psi_{d_2}$ are better than $\hat\psi_{d_0}$; (iii) in setting (A2), 
 $d_2 < d_0 < d_1$, and $\hat\psi_{d_2}$ is better than $\hat\psi_{d_0}$ but $\hat\psi_{d_1}$ is worse than 
 $\hat\psi_{d_0}$; (iv) in setting (A3), $d_1 < d_2 =d_0$, and $\hat\psi_{d_1}$ is better than $\hat\psi_{d_0}$ and 
 $\hat\psi_{d_2}$ is comparable to $\hat\psi_{d_0}$ except for the cases where 
 $E(Y|U=u_0)=137.7$ and 138.8; (v) in setting (A1$'$), $\widetilde{\psi}_{d_0}$ is only slightly better than $\hat\psi_{d_0}$ for some cases and is still worse than $\hat\psi_{d_1}$ or $\hat\psi_{d_2}$.

In the second simulation study we would like to examine the effect of covariate densities not bounded away from 0 and the use of transformation  discussed in the end of Section 3. We  consider the following setting:
 \begin{itemize}
	\item[(B)]
	$Y = 2z_3(z_1+u_3)+\epsilon=2(u_1+\eta_3)(u_1+u_3+\eta_1)+\epsilon$, 
 $z_1 = u_1 + \eta_1$, 
$			z_2=u_2+\eta_2$, 
$			z_3=u_1+\eta_3 $,  
	 $U\sim N(\mu,\Sigma)$, $\eta\sim N(0,\Sigma_{\eta})$, and $\epsilon\sim N(0,3)$, where $\mu=(3,-2,-6,3)^T$, 
	\[
	\Sigma=\left( \begin{array}{rrrr} 1 & 0 & 0.2 & 0\\0 & 0.3 & 0 & 0 \\ 0.2 & 0 & 2 & 0 \\ 0 & 0 & 0 & 1 \end{array} \right) \quad\quad \mbox{and} \quad\quad
	\Sigma_{\eta}=\left( \begin{array}{rrrr}3 &0 & 0\\ 0 & 1 & 0 \\0 & 0 & 2\end{array} \right)
	\]
\end{itemize}
Under setting (B), 
\[ B = \left( \begin{array}{rr}
1 & 0  \\
0 & 0  \\
0 & 1  \\
0 & 0  \end{array} \right) \quad \quad
C_z = \left(  \begin{array}{rr} 
1 & 0 \\
0 & 0 \\
0 & 1 \end{array} \right) \quad\quad
C_u =  \left(  \begin{array}{rr} 
0 &  0  \\
0 & 0  \\
1 & 0 \\
0 & 0 \end{array} \right) \quad\quad
C= \left(  \begin{array}{r}	
1  \\ 
0  \\
0  \\
0  \end{array} \right) \]
\[ D_z = \left(  \begin{array}{rr} 
1  & 0  \\
0 & 0 \\
0 & 1 \end{array} \right) \quad\quad
D_u =  \left(  \begin{array}{rrrr} 
0 & 0 \\
1 & 0  \end{array} \right) \quad\quad
D = 
\left(  \begin{array}{rrr}	
1 \\
0 \end{array} \right) \quad\quad 
BD =  \left(  \begin{array}{rrr}	
1 \\ 
0 \\
0 \\
0 \end{array} \right) \]
Note that $d_1=d_2=1< d_0=2 < p=4$. We still assume that 
the dimensions of these matrices are  known. 
We apply the transformation discussed in the end of Section 3 with 
$\Psi_j$ being either the standard normal or the empirical distribution.  The resulting estimators are denoted by 
$\hat\psi_{d}^N$ and $\hat\psi_{d}^E$, respectively, $d=d_1$ or $d_2$.

 \begin{table}[h]
 	\caption{Absolute value of bias and  root mean squared error (RMSE) of different estimators at  eight values of  $E(Y|U=u_0)$ based on $n=200$ and 1,000 simulations.}\label{B200}
 	\centering
	\scalebox{0.92}{
	\begin{tabular}{cccrrrrrrrr}
	\hline
	&& &\multicolumn{8}{c}{value of $E(Y|U=u_0)$} \\ 
setting &	quantity & method \hspace{3mm}   & -23.93  & -22.70  & -18.53 & -16.99 & -14.32 & -12.35 & -10.31& -9.09  \\ 
\hline 
(B) &	$|$bias$|$ & $\hat\psi_p$  & 3.39 & 2.83  & 1.56 & 0.01  & 1.32  & 2.32 & 2.73 & 3.06 \\ 
&& $\hat\psi_{d_0}$  & 1.80 & 2.44 & 2.57 & 0.35  & 0.66  & 1.57 & 1.94& 2.04  \\  
&& $\hat\psi_{d_1}$  & 1.11 & 1.71  & 1.63 & 0.58 & 0.09 & 0.74  & 1.24 & 1.36\\ 
&&$\hat\psi_{d_1}^N$  & 0.79 & 1.48 & 1.55 & 0.55  & 0.22 & 0.45  & 0.91  & 0.99 \\  
&	& $\hat\psi_{d_1}^E$ & 0.83 & 1.50 & 1.56 & 0.61  & 0.26 & 0.47  & 0.90  & 0.98 \\ 
&& $\hat\psi_{d_2}$ & 1.10  & 2.15  & 2.50 & 0.71  & 0.02 & 0.78& 1.17  & 1.23 \\ 
&&$\hat\psi_{d_2}^N$ & 0.82  & 1.95 & 2.49 & 0.65  & 0.08 & 0.55 & 0.91 & 0.87 \\ 
&	& $\hat\psi_{d_2}^E$  & 0.81 & 1.96  & 2.49 & 0.67  & 0.11 & 0.58 & 0.90  & 0.86  \\ 
\cline{4-11}
&	RMSE & $\hat\psi_p$  & 4.15 & 3.69  & 6.72 & 3.55  & 3.20 & 3.41  & 3.42  & 3.63 \\ 
&& $\hat\psi_{d_0}$  & 3.26 & 3.54  & 4.59 & 3.27 & 3.07 & 3.03 & 2.89& 2.91\\ 
&& $\hat\psi_{d_1}$  & 2.14 & 2.50 & 2.67 & 2.25 & 2.21 & 2.15  & 2.02 & 2.03 \\
&&$\hat\psi_{d_1}^N$  & 2.03 & 2.46 & 2.69 & 2.51 & 2.47  & 2.28  & 1.95 & 1.90  \\ 
&	& $\hat\psi_{d_1}^E$  & 2.06 & 2.47  & 2.68 & 2.51  & 2.45  & 2.26 & 1.94  & 1.88 \\ 
&& $\hat\psi_{d_2}$  & 2.51  & 3.05 & 4.09 & 2.91 & 2.73  & 2.47  & 2.12 & 2.09 \\ 
&&$\hat\psi_{d_2}^N$  & 2.48  & 3.01  & 4.16 & 3.04  & 2.90 & 2.58  & 2.06 & 1.93\\ 
&	& $\hat\psi_{d_2}^E$  & 2.48  & 3.02 & 4.25 & 3.08  & 2.96 & 2.60  & 2.06 & 1.96 \\  
\hline
\end{tabular}}
\end{table}
 
 From the results in Table \ref{B200}, $\hat{\psi}_{d_2}$ has larger RMSE than $\hat{\psi}_{d_1}$, which may be true in general 
 when $d_1=d_2=1$ because when   $\hat{\psi}_{d_1}$ and  $\hat{\psi}_{d_2}$ have the same convergence rate, 
  $\hat{\psi}_{d_2}$ may have worse finite sample performance as it 
requires an additional application of SDR. Even this is the case, $\hat{\psi}_{d_2}$ still outperforms $\hat{\psi}_{d_0}$ and  $\hat{\psi}_{p}$.
 The difference between  $\hat{\psi}_{d}^N$ and  $\hat{\psi}_{d}^E$ based on two transformation methods is  small, indicating that the use of empirical distribution is adequate. Since $\hat{\psi}_{d}^N$ and  $\hat{\psi}_{d}^E$ are comparable with 
  $\hat{\psi}_{d}$, the estimator without covariate transformation, the results show that Assumption 1 is not necessary for better performance of   $\hat{\psi}_{d}$ over 
  $\hat{\psi}_{d_0}$ and  $\hat{\psi}_{p}$.

 So far the dimensions
 of matrices $B$, $C_{zu}$, $C$, $D_{zu}$, and $D$  are assumed known. 
 In the third simulation study, we estimate these dimensions  using a bootstrap procedure  described by \citet{dong_li_2010} and recommended by \citet{ma_zhu_2012}, with bootstrap Monte Carlo size $30$. 
 Settings (A1), (A2), and  (B) are considered. Under setting (B), only results with 
$\hat{\psi}_{d}^N$ are reported since $\hat{\psi}_d$ and 
 $\hat{\psi}_{d}^N$ are similar.  Simulation results are shown in Table 3.

It can be seen from Table 3 that, in terms of RMSE,  $\hat{\psi}_{d_1}$ and  $\hat{\psi}_{d_2}$ perform worse than those obtained using the true dimensions of matrices (Tables 1-2). But they are still better than $\hat{\psi}_{d_0}$ except for $\hat\psi_{d_2}$ in three cases under setting (B). Note that estimation of dimensions of subspaces is a difficult topic in the literature of SDR. More accurate estimators of the dimensions of matrices in SDR will result in better performance of our estimators 
 $\hat{\psi}_{d_1}$ and  $\hat{\psi}_{d_2}$.

 \begin{table}[h]
 	\caption{Absolute value of bias  and  root mean squared error (RMSE) of different estimators at  eight values of  $E(Y|U=u_0)$ based on $n=200$ and 1,000 simulations; the dimensions of matrices in SDR are selected by bootstrap.}\label{bootA200}
\centering
	\scalebox{0.92}{
 	\begin{tabular}{cccrrrrrrrr}
 		\hline
 		&& &\multicolumn{8}{c}{value of $E(Y|U=u_0)$} \\ 
 		setting &	quantity & method \hspace{3mm}   & 1.42 & 1.71 & 1.79 & 1.83 & 1.90 & 4.31 & 7.88 & 8.61 \\ 
 		\hline
 		(A1) &	|bias| & $\hat\psi_p$  & 0.13 & 0.31 & 0.10 & 0.15 & 0.11 & 0.86 & 0.86 & 0.46 \\ 
         & & $\hat\psi_{d_0}$  & 0.01 & 0.20 & 0.06 & 0.00 & 0.05 & 0.24 & 0.70 & 0.22 \\ 
         &	 & $\hat\psi_{d_1}$& 0.03 & 0.18 & 0.09 & 0.02 & 0.08 & 0.25 & 0.51 & 0.14 \\ 
         &	 & $\hat\psi_{d_2}$& 0.01 & 0.20 & 0.12 & 0.01 & 0.10 & 0.15 & 0.60 & 0.01 \\ \cline{4-11}	
\cline{4-11}	
         & RMSE & $\hat\psi_p$ & 0.36 & 0.68 & 0.45 & 0.47 & 0.41 & 1.13 & 1.82 & 1.34 \\ 
         & & $\hat\psi_{d_0}$ & 0.26 & 0.49 & 0.32 & 0.33 & 0.29 & 0.46 & 1.11 & 0.87 \\ 
         &	& $\hat\psi_{d_1}$ & 0.22 & 0.38 & 0.25 & 0.26 & 0.24 & 0.47 & 0.93 & 0.66 \\ 
         & & $\hat\psi_{d_2}$ & 0.20 & 0.33 & 0.25 & 0.24 & 0.24 & 0.34 & 0.89 & 0.56 \\ 
 		\cline{4-11}
 		&& &\multicolumn{8}{c}{value of $E(Y|U=u_0)$} \\ 
 	&	 &    & 23.57 & 26.83 & 28.40 & 29.74 & 35.00 & 36.91 & 39.46 & 41.47 \\ 
 		\cline{4-11}
       (A2) &	|bias| & $\hat\psi_p$   & 0.30 & 0.59 & 1.23 & 0.09 & 1.30 & 2.06 & 3.08 & 3.56 \\ 
        & & $\hat\psi_{d_0}$& 0.53 & 0.03 & 0.77 & 0.34 & 0.41 & 0.40 & 1.58 & 2.00 \\ 
        &	 & $\hat\psi_{d_1}$ & 0.20 & 0.10 & 0.93 & 0.86 & 1.45 & 1.64 & 2.99 & 3.51 \\ 
        &	 & $\hat\psi_{d_2}$& 0.14 & 0.21 & 0.67 & 0.41 & 0.75 & 0.96 & 1.77 & 2.41 \\ \cline{4-11}
  & RMSE & $\hat\psi_p$ & 3.59 & 4.00 & 4.06 & 4.68 & 3.66 & 3.45 & 4.94 & 5.38 \\ 
        & & $\hat\psi_{d_0}$& 3.29 & 3.48 & 3.58 & 4.03 & 3.08 & 2.83 & 3.79 & 4.07 \\ 
        &	& $\hat\psi_{d_1}$& 3.14 & 3.45 & 3.46 & 4.08 & 3.31 & 3.23 & 4.45 & 4.88 \\ 
        & & $\hat\psi_{d_2}$  & 2.88 & 3.11 & 3.12 & 3.35 & 2.83 & 2.62 & 3.38 & 3.82 \\ 
   		  \cline{4-11}
 		&& &\multicolumn{8}{c}{value of $E(Y|U=u_0)$} \\ 
  &	 &   &  -23.93 & -22.70& -18.53 & -16.99 & -14.32  & -12.35 & -10.31  & -9.09  \\ 
 \cline{4-11} 
 (B) &	$|$bias$|$ & $\hat\psi_p$ & 3.46 & 2.87 & 1.41 & 0.01  & 1.36  & 2.34 & 2.77 & 3.11  \\ 
 && $\hat\psi_{d_0}$ & 2.56  & 2.66 & 2.50 & 0.40 & 0.85 & 1.67  & 1.72 & 1.77\\ 
 && $\hat\psi^N_{d_1}$  & 2.61  & 1.99 & 2.21 & 0.86 & 0.22  & 0.61  & 0.11 & 0.22 \\ 
 && $\hat\psi^N_{d_2}$ & 2.20 & 2.66 & 3.13 & 0.95 & 0.01 & 0.55  & 0.19 & 0.07 \\ \cline{4-11}
 &	RMSE  & $\hat\psi_p$ & 4.17  & 3.66 & 6.29 & 3.42 & 3.16  & 3.36  & 3.43  & 3.64 \\ 
 && $\hat\psi_{d_0}$ & 3.73 & 3.58 & 4.37 & 2.94 & 2.88   & 2.97   & 2.79 & 2.86 \\ 
 && $\hat\psi^N_{d_1}$ & 3.50  & 2.84  & 3.69 & 2.75 & 2.52 & 2.27  & 2.08 & 2.18 \\ 
 && $\hat\psi^N_{d_2}$ & 3.49  & 3.62 & 4.67 & 3.11 & 2.86  & 2.59   & 2.16 & 2.25 \\ 
 \hline
	\end{tabular}}
 \end{table}

  \section{Data Analysis}
Breast cancer has been taking a toll on the lives of women. The good news is that regular mammography screening can help to reduce mortality. \citet{champion_et_al_2014} did a Computer and Phone (CAPE) study including two tailored intervention methods, mailed DVD (abbreviated as DVD) and telephone counseling (abbreviated as TC). A CAPE randomized controlled trial was
conducted to determine whether the two interventions were more efficacious than the usual care (abbreviated as UC) method  at promoting  mammography screening  among women who are non-adherent to breast cancer screening guidelines at baseline.
If the answer is yes, then we are further interested in 
which of DVD, TC, and UC methods 
is more efficacious at promoting  mammography screening for women with 
a particular set of demographic values. This involves estimation of 
$ E(Y|U=u_0, a)$ for fixed $u_0$ and $a$ as well as 
$\mu (a) = E(Y|a) = E\{E(Y|U,a)\}$, where $Y$ is an outcome of interest, $U$ is a vector of demographic variables, and $a =$ DVD, TC, and  UC corresponding to  mailed DVD, telephone counseling, and usual care, respectively, which is treated as a treatment indicator. 

In the CAPE dataset, there are 26 demographic variables such as  
age, years of education, and household income, collected at baseline of the study. 
The outcome we consider is perceived barriers, one of the  health belief variables related with health behaviors according to the Health Belief Model. The variable of 
perceived barriers is the sum of grades (typically 1-5) to  questions such as ``I am afraid of finding out that I might have breast cancer'', ``the treatment for breast cancer is worse than the cancer itself'', ``having a mammogram is painful for me'', ``I don't have the time to get a mammogram'', etc. Other 
belief variables include perceived risk, perceived benefits,  self-efficacy, breast cancer fear, and fatalism. We focus on  perceived barriers for illustration. 

The outcome of perceived barriers in the CAPE dataset is actually longitudinal and observed at baseline, one month after baseline, and six months after baseline. 
We are interested in the outcome of perceived barriers after six months from the time an individual is assigned to one of DVD, TC, and UC. Thus,  $Y=bar3tot$ is the score of perceived barriers at  month six after baseline and the scores of perceived barriers at
baseline ($bar1tot$) and at month one after baseline  ($bar2tot$) are two components of $Z$ that are closely related with $Y$ but not available in the future prediction. 

After eliminating units with missing data, 
the training dataset for our analysis contains  $357$, $434$, and $423$ sampled units 
for the DVD, TC, and UC methods, respectively. 

Note that 26 demographic variables (covariates) are too many even for SDR. 
Thus, we follow the idea in \citet{mai_zou_2015}  that applies fused Kolmogorov filter  to screen out some  demographic variables  not useful  in predicting $Y$. A fused Kolmogorov filter statistic that measures the dependence between a certain covariate $X_j$ and the continuous response variable $Y$ is defined as
$$\hat K_j = \sum_{i=1}^N \hat{K}_j^{\bm{G_i}} $$
 with $\hat{K}_j^{\bm{ G_i}} = \max_{l,m}\sup_x\abs{\hat{F}_j(x|H^i=l)-\hat{F}_j(x|H^i=m)}$. $\bm{G_i}$ is a uniform partition of $Y$ with $G_i$ slices containing
the intervals bounded by the $l/G_i$th sample quantiles of $Y$ for $l = 0, . . . ,G_i$, and $H^i=l$ if $Y$ is in the $l$th slice. $\hat{F}_j(x|H^i)$ is the empirical CDF of $X_j$ conditional on $H^i$. $N$ is the total number of different partitions. We pick uniform partitions $\bm{G_i}$s with $G_i=3,4 ,... ,[\log n]$ and calculate fused Kolmogorov filter statistics for each variable $X_j$ in $bar1tot$, $bar2tot$ and all demographic variables under DVD, TC and UC methods. Results are presented in Figure \ref{filter}. A higher fused Kolmogorov filter statistic indicates a stronger relationship between a covariate and $Y$. It coincides with our instinct that the two components of $Z$, $bar1tot$ and $bar2tot$, are the best predictors of $Y$ in all the three sub-datasets, as their fused Kolmogorov filter statistics are much greater than those for the demographic variables. 
For the sub-dataset under DVD,  we keep 5 demographic variables with the biggest  fused Kolmogorov filter statistics next to $bar1tot$ and $bar2tot$ and treat them as $U$, since  there is a sudden decrease in fused Kolmogorov filter statistics at the $6$th demographic variable SF12RP1.    As suggested in Figure \ref{filter}, variable $U$ under DVD contains   ``income3'', ``educyrs'', ``yearmamsum'', ``SF12GH1'' and ``age'', which represent ``household income'', ``years of education'', ``number of years had a mammogram in the past 2 to 5 years'', ``SF12 general health scale score'' and ``age'', respectively. For the other two sub-datasets under TC and UC, for simplicity we just keep the 5 demographic variables next to $bar1tot$ and $bar2tot$, although these variables may be different from those under DVD.  The selected $U$ variables ``hcreminder'', ``SF12VT1 '' and ``SF12MH1'' under TC or UC represent ``whether or not received any reminders from your health care facility that it was time for you to have a mammogram'', ``SF12 vitality scale score'' and ``SF12 mental health scale score''.


\begin{figure}[h]
\begin{center}
\caption{Fused Kolmogorov Filter Statistics for all demographic variables under DVD, TC, and UC methods}\label{filter}
\begin{tikzpicture}
  \node (img1)  {\includegraphics[scale=0.26]{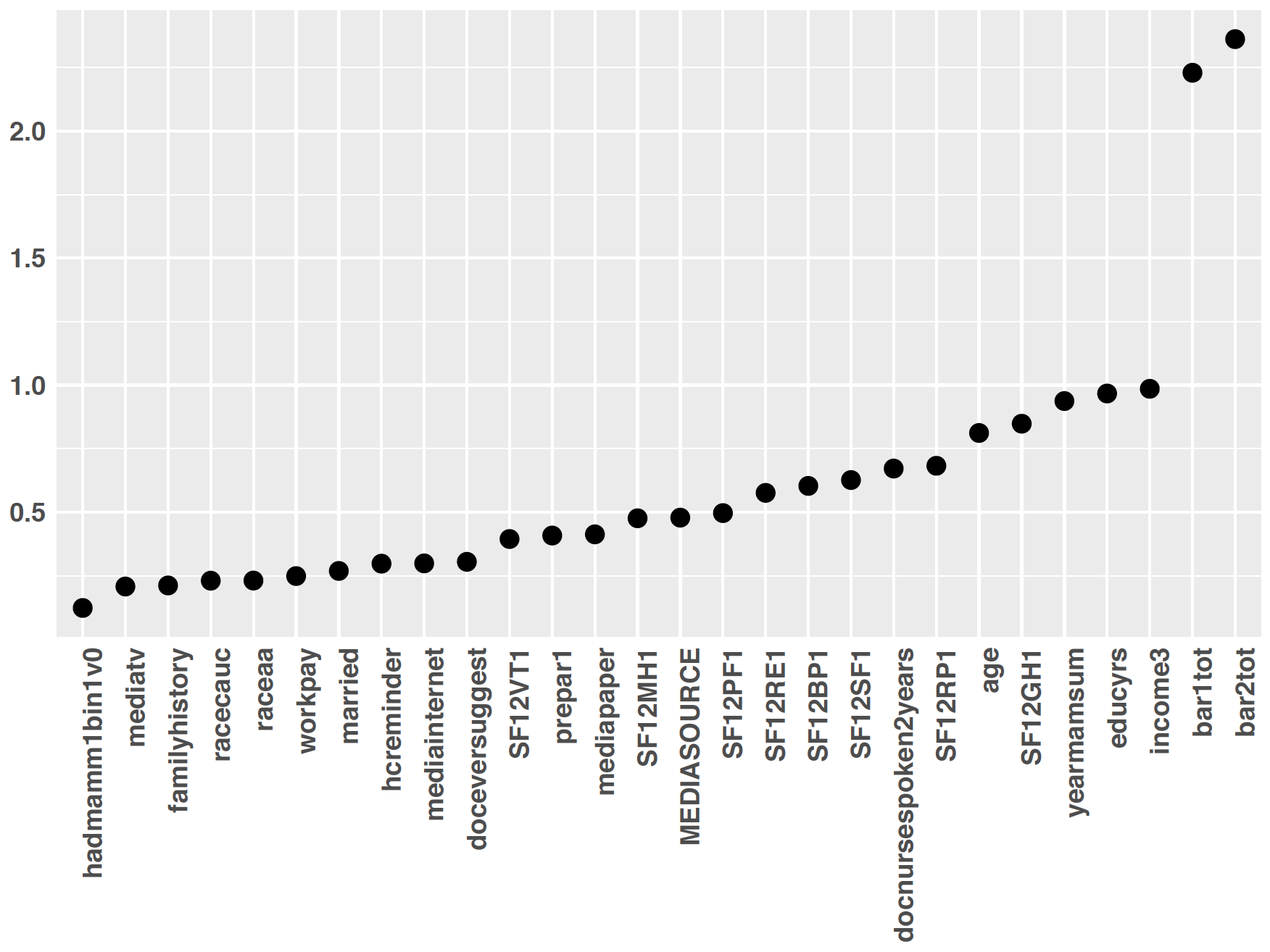}};
  \node[left=of img1, node distance=0cm, anchor=center,yshift=-0.7cm] {\footnotesize DVD};
  \node[below=of img1,yshift=0.5cm] (img2)  {\includegraphics[scale=0.26]{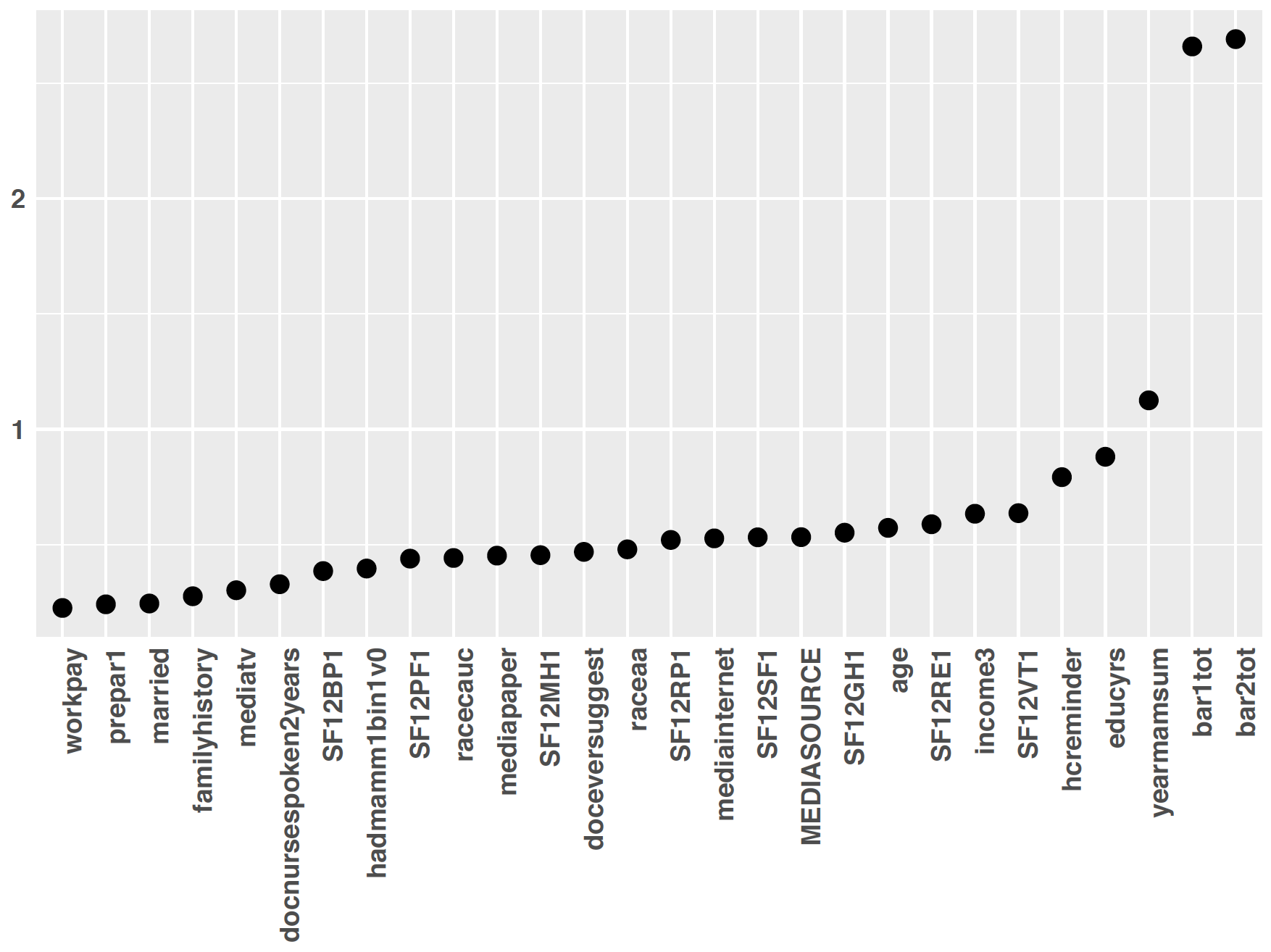}};
  \node[left=of img2, node distance=0cm, anchor=center,yshift=-0.7cm] {\footnotesize TC};
    \node[below=of img2,yshift=0.5cm] (img3)  {\includegraphics[scale=0.26]{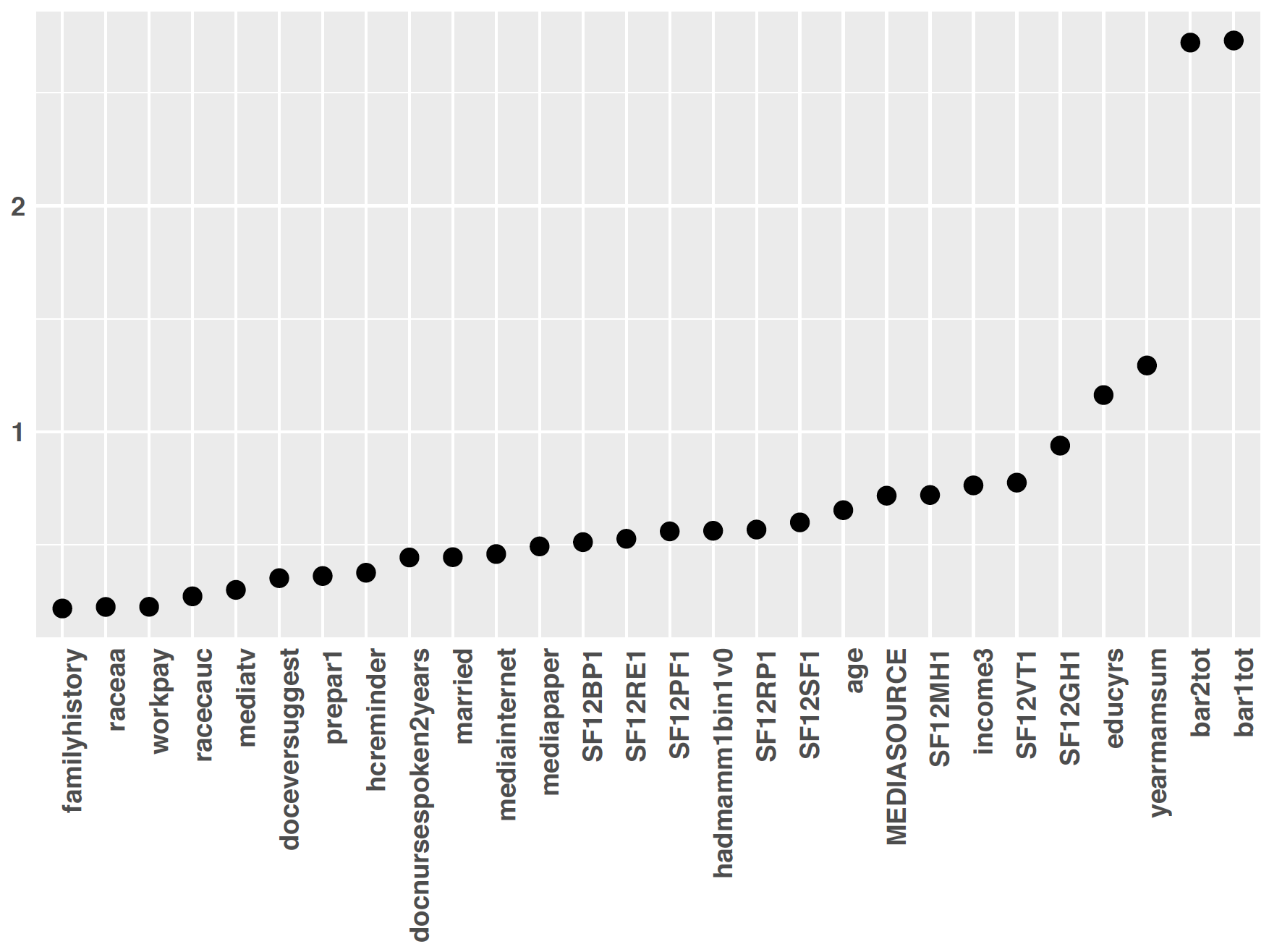}};
  \node[below=of img3, node distance=0cm, yshift=1cm] {\small demographic variable};
  \node[left=of img3, node distance=0cm, anchor=center,yshift=-0.7cm] {\footnotesize UC};
\end{tikzpicture}
\end{center}
\end{figure}

First, we would like to examine whether DVD and TC are more efficacious than UC   at promoting  mammography screening. Note that this can be done using $Y$ data only, i.e., the two sample t-tests based on sample means and variances. The results from the two-sample t-tests, however, show that there is no significant difference among the three methods, i.e., the p-values for rejecting $\mu({\rm DVD}) = \mu ({\rm UC})$,   
$\mu({\rm TC})  = \mu ({\rm UC})$, and $\mu({\rm DVD}) = \mu ({\rm TC})$ are 
0.87, 0.33, and 0.43, respectively. The insignificance results may be due to large variability in $Y$ data. If we make use of covariates, the results may be different. 

Under each DVD, TC, and UC, we compute estimators  $\hat{\psi}_p$, $\hat{\psi}_{d_0}$,  $\hat{\psi}_{d_1}$, and $\hat{\psi}_{d_2}$ without covariate transformation, using the procedures given in Section 2.  
Dimensions of matrices for using SDR  estimated by the bootstrap as described  in the simulation are given as follows.
\begin{table}[H]
	\centering
	\begin{tabular}{cccccc}
		\hline
		& \multicolumn{5}{c}{estimated dimension} \\
	group 	&  $d_0$ & $\dbar_1$ & $d_1$ & $\dbar_2$ & $d_2$ \\ 
		\hline
		DVD  & 2 & 2 & 2 & 2 & 1\\ 
		TC  & 1 & 2 & 1 & 1 & 1\\
		UC  & 1 & 2 & 2 & 2 & 1\\
		\hline
	\end{tabular}
\end{table}

Boxplots of values of $Y_i$ and $\hat\psi_{d}(U_i) $ with $d=p, d_0, d_1$, or $d_2$ are shown in Figure 2. 
It can be seen that clearly $Y_i$ without using any covariate has much larger variability than $\hat\psi_{d}(U_i) $'s using covariate information, and 
$\hat\psi_{p}(U_i) $ has the largest
variability among the four estimators using covariates.   For DVD group, 
 $\hat\psi_{d_2}$ has the least variability but 
$\hat\psi_{d_0}$ is not too bad. For TC, $\hat\psi_{d_1}$ is the best. For UC, 
$\hat\psi_{d_1}$ and $\hat\psi_{d_2}$ are comparable and are much less variable than $\hat\psi_{d_0}$.

Note that we can estimate $\mu(a) = E\{ E(Y|U,a)\}$ using the average of
$\hat\psi_{d}(U_i)  $'s with $U_i$'s in each method group and  $d=p, d_0, d_1$, or $d_2$.
 Using $\hat\psi_{d}(U_i)  $'s and 10,000 random permutations, we obtain  p-values 
 for testing various hypotheses based on  
 $d=p, d_0, d_1$, or $d_2$.  The results are shown in Table 4. The reason we also consider one-sided tests is because the method with smaller $\mu(a)$ is better at promoting  mammography screening.
 
	\begin{figure}[h]
		\centering
	\caption{Boxplots of $Y_i$ and $\hat\psi_d(U_i)$ in each method group, $d=p, d_0, d_1$, or $d_2$.}
		\includegraphics[width=0.8\textwidth]{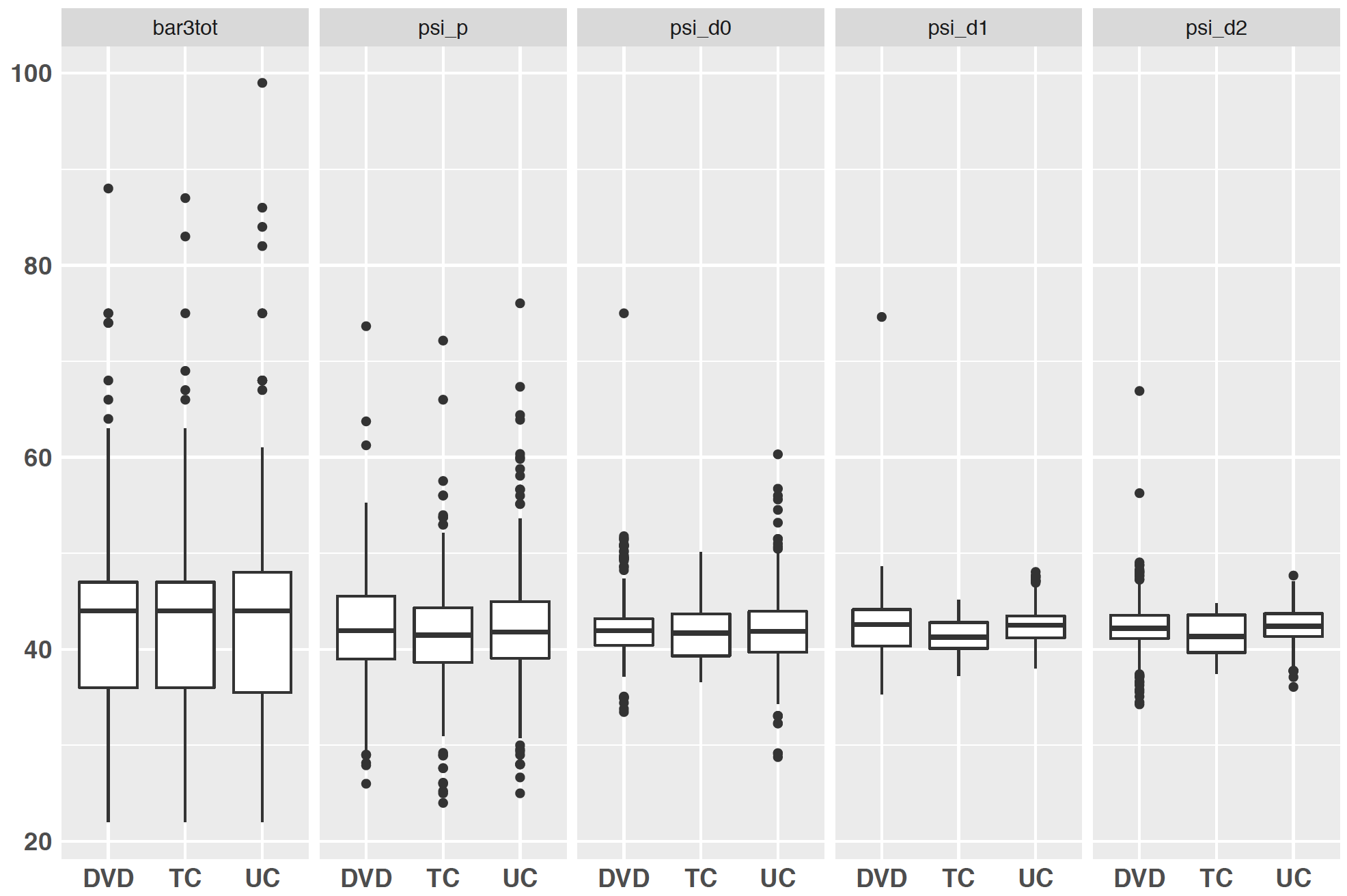}
	\end{figure}

		\begin{table}[h]
			\centering
			\caption{p-values using $\hat\psi_d$ and 10,000 permutations under different hypotheses}
			\begin{tabular}{ccccccc}
				\hline
				&&&\multicolumn{4}{c}{method} \\
			$H_0$	& $H_1$ & & $\hat{\psi}_{p}$ & $\hat{\psi}_{d_0}$ & $\hat{\psi}_{d_1}$ & $\hat{\psi}_{d_2}$ \\ 
				\hline
				$\mu({\rm DVD}) = \mu({\rm UC})$  & $\mu({\rm DVD}) \neq \mu({\rm UC})$  & & 0.55 & 0.65 & 0.49 & 0.88 \\ 
$\mu({\rm DVD}) \geq \mu({\rm UC})$  & $\mu({\rm DVD}) < \mu({\rm UC})$  & & 0.27 & 0.68 & 0.25 & 0.56  \\ 
$\mu({\rm TC}) = \mu({\rm UC})$  & $\mu({\rm TC}) \neq \mu({\rm UC})$  & 	&  0.04 &  0.01 &  0.00 &  0.00 \\ 
$\mu({\rm TC}) \geq \mu({\rm UC})$  & $\mu({\rm TC}) < \mu({\rm UC})$  & 		 &  0.02 & 0.01 &  0.00 &  0.00\\ 
	$\mu({\rm TC}) = \mu({\rm DVD})$  & $\mu({\rm TC}) \neq \mu({\rm DVD})$  &  & 0.18 &  0.00 &  0.00&  0.00 \\ 
		$\mu({\rm TC}) \geq \mu({\rm DVD})$  & $\mu({\rm TC}) < \mu({\rm DVD})$  & 	 &  0.09 &  0.00 &  0.00 &  0.00\\ 
				\hline
	\end{tabular}
		\end{table}
		
	The results in Table 4 show that TC is better than either UC or DVD with high significance when 
		$\hat\psi_{d_0}$, 
		$\hat\psi_{d_1}$, or  
		$\hat\psi_{d_2}$ is used, but not $\hat\psi_{p}$. Thus, applying SDR is beneficial in this example. 
		Also, all methods cannot detect any difference between DVD and UC. 
		
		The previous analysis shows some advantages of using the proposed 
		$\hat\psi_{d_1}$ and/or $\hat\psi_{d_2}$, but the accuracy of estimators of $E(Y|U=u_0,a)$ has not been 
		investigated. Different from the simulation study, 
	the true value of $E(Y|U=u_0, a)$ is unknown in the real data analysis. Thus, we apply the following cross-validation to assess the accuracy for different estimation methods. 
	
	The following discussion is for a fixed $a =$ DVD, TC, or UC.
	 We divide the dataset into 10 subsets with roughly the same sample size, say $n_c$. Let $S_c$ be one such subset, $c=1,...,10$. We use data not in $S_c$ but in other subsets to obtain the estimator $\hat\psi_d^{(c)}$, where supscript $(c)$ indicates using data not in $S_c$ and $d=p,d_0,d_1$, or $d_2$. 	Then, we assess the accuracy by $[ Y_i - \hat\psi_d^{(c)}(U_i)]^2$ for $i \in S_c$, noting that $(Y_i,U_i)$, $i \in S_c$, are not used in the construction of $\hat\psi_d^{(c)}$. After crossover all $S_c$, we 
	estimate $E[Y- \hat\psi_d(U)]^2$ by 
	\[ CV(10) = \frac{1}{10} \sum_{c=1}^{10} \frac{1}{n_c} \sum_{i \in S_c} \left[ Y_i - \hat\psi_d^{(c)} (U_i)\right]^2 .   \]
However,  $E[Y- \hat\psi_d(U)]^2$  is not the mean-squared error of $\hat\psi_d$. 
Note that 
$$
E[Y-\hat\psi_d (U) ]^2  =E[E(Y|U)-\hat\psi_d(U)  ]^2 
+ E[Y - E(Y|U) ]^2 
$$
because the cross product term 
\begin{align*}
E\left([E(Y|U)-\hat\psi_d(U) ] [Y - E(Y|U)] \right) & = 
E\left\{ E \left([E(Y|U)-\hat\psi_d(U) ] [Y - E(Y|U) ] \bigg| U\right) \right\} \\
& = E\left\{[E(Y|U)-\hat\psi_d(U)  ] E  \left( [Y - E(Y|U)] \bigg| U\right) \right\}\\
& = 0 .  
\end{align*} 
The term $E[E(Y|U)-\hat\psi_d(U)  ]^2$ is an average mean squared error (AMSE) of 
$\hat\psi_d(U)$ over all $U$ values. 
If we can estimate $\sigma^2 =  E[Y - E(Y|U) ]^2$ by $\hat\sigma^2$, then we can estimate AMSE of $\hat\psi_d$ by $CV(10)-\hat\sigma^2$. 

We utilize the difference-based variance estimators proposed in \citet{hall_kay_titterington_1990}, \citet{hall_kay_titterington_1991} and \citet{munk_bissantz_wagner_freitag_2005} to estimate $\sigma^2$, where $U$ in 
$\sigma^2 = E[Y-E(Y|U)]^2$ is 
replaced by    $\hat{B}^T U$.
Estimated values of $\sigma^2$, AMSE for
 $\hat\psi_{d_0}$,  $\hat\psi_{d_1}$, 
 and  $\hat\psi_{d_2}$ for three groups DVD, TC, and UC are  
 given in Table 5. 
 
 It can be seen from Table 5 that using $Z$ data helps in the estimation of $E(Y|U=u_0)$ on the average, especially when we use $\hat\psi_{d_2}$. 
 
\section*{Appendix}


\textbf{Proof of Lemma 1}
\begin{proof}
(i) Since $Y\independent (Z,U) \mid C_z^T Z + C_u^T U$, then $Y\independent (Z,U) \mid (C_z^T Z ,C_u^T U)$ and further $Y\independent U \mid (C_z^T Z ,C_u^T U)$. By the definition of partial central subspace in \citet{chiaromonte_cook_li_2002}, 
the partial central space $\mathcal{S}_{Y|U}^{C_z^T Z}\subseteq \mathcal{S}(C_u)$. By  Proposition 3.1 and equation (3.1) in \citet{hung_liu_lu_2015}, it is easy to get $\mathcal{S}(B)\subseteq \mathcal{S}_{Y|U}^{C_z^T Z} \oplus \mathcal{S}(C)\subseteq \mathcal{S}(C_u) \oplus \mathcal{S}(C)$. Furthermore, if $C_z^T Z \independent U \mid B^T U$, by the denfinition of central subspace in \citet{cook_1998}, $\mathcal{S}(C)\subseteq \mathcal{S}(B)$. \\
(ii) 
For the same reason as in (i), the partial central space $\mathcal{S}_{Y|B^TU}^{D_z^T Z}\subseteq \mathcal{S}(D_u)$, and then we can get $\mathcal{S}(I_{d})\subseteq \mathcal{S}_{Y|B^TU}^{D_z^T Z}\oplus \mathcal{S}(D)\subseteq \mathcal{S}(D_u)\oplus \mathcal{S}(D)$.  On the other hand, $\mathcal{S}(D_u)\oplus \mathcal{S}(D) \subseteq\mathcal{S}(I_d)$. As a result, $\mathcal{S}(D_u)\oplus \mathcal{S}(D) = \mathcal{S}(I_d)$, which means $\mathcal{S}(B) = \mathcal{S}(BD_u)\oplus \mathcal{S}(BD)$.
\end{proof}

The following Lemmas \ref{lemMaZhu} - \ref{lemlin} are all used for the proof of Theorem 1. 

\begin{lemma}\label{lemMaZhu}
If
Assumptions 
1,
3(i)(ii) and 
4 hold, then 
\begin{align*}
&\sup_{z, \Omega}\abs[\Bigg]{\dfrac{\hat{\gamma}_2(\hat{C}_z^T z + \hat{C}_u^T u_0)}{\hat{\gamma}_1(\hat{C}_z^T z + \hat{C}_u^T u_0)}-\dfrac{\hat{\gamma}_2(C_z^T z+C_u^T u_0)}{\hat{\gamma}_1(C_z^T z+C_u^T u_0)}-\dfrac{\gamma_2(\hat{C}_z^T z + \hat{C}_u^T u_0)}{\gamma_1(\hat{C}_z^T z + \hat{C}_u^T u_0 )}+\dfrac{\gamma_2(C_z^T z+C_u^T u_0)}{\gamma_1(C_z^T z+C_u^T u_0)}}\\
= & O_{p}(\hbar^{\mbar}n^{-1/2}+n^{-1}\hbar^{-(\dbar+1)}\log n),
\end{align*}
\begin{align*}
&\sup_{\Omega}\Bigg|\dfrac{1}{n}\sum_{i=1}^{n}\dfrac{\hat{\gamma}_2(\hat{C}_z^T Z_i+\hat{C}_u^T u_0)}{\hat{\gamma}_1(\hat{C}_z^T Z_i+\hat{C}_u^T u_0)}K_{\stwo{h}}(\hat{C}^T U_i-\hat{C}^T u_0) \\
& -\dfrac{1}{n}\sum_{i=1}^{n}\dfrac{\hat{\gamma}_2(\hat{C}_z^T Z_i +\hat{C}_u^T u_0)}{\hat{\gamma}_1(\hat{C}_z^T Z_i+\hat{C}_u^T u_0)}K_{\stwo{h}}(C^T U_i-C^T u_0)\\
&-E\left(\left.\dfrac{\hat{\gamma}_2(\hat{C}_z^T Z+\hat{C}_u^T u_0)}{\hat{\gamma}_1(\hat{C}_z^T Z+\hat{C}_u^T u_0)}\right\vert \hat{C}^T U = \hat{C}^T u_0\right) f_{\hat{C}^T U}(\hat{C}^T u_0)\\ 
& +E\left(\left.\dfrac{\hat{\gamma}_2(\hat{C}_z^T Z+\hat{C}_u^T u_0)}{\hat{\gamma}_1(\hat{C}_z^T Z+\hat{C}_u^T u_0)}\right\vert C^T U = C^T u_0\right)
f_{C^T U}(C^T u_0)\Bigg| \\
=& O_{p}(\stwo{h}^{\stwo{m}}n^{-1/2}+n^{-1}\stwo{h}^{-(\stwo{d}+1)})
\end{align*}
and
\begin{align*}
&\sup_{\Omega}\bigg|\dfrac{1}{n}\sum_{i=1}^{n} K_{\stwo{h}}(\hat{C}^T U_i-\hat{C}^T u_0)-\dfrac{1}{n}\sum_{i=1}^{n} K_{\stwo{h}}(C^T U_i-C^T u_0)-f_{\hat{C}^T U}(\hat{C}^T u_0)+ f_{C^T U}(C^T u_0)\bigg|\\
=& O_{p}(\stwo{h}^{\stwo{m}}n^{-1/2}+n^{-1}\stwo{h}^{-(\stwo{d}+1)})
\end{align*}
\end{lemma}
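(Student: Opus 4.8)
The plan is to follow the linearization strategy of \citet{ma_zhu_2012} and reduce all three displays to uniform control of kernel-estimator \emph{increments} under perturbation of the projection matrices. The common device is to decompose each ``double difference'' (an empirical-minus-population quantity evaluated at $\hat{C}$, minus the same quantity evaluated at $C$) into a deterministic bias-difference piece and a centered stochastic piece. Throughout I would use the SDR rates $\norm{\hat{C}_{zu}-C_{zu}}=O_p(n^{-1/2})$ and $\norm{\hat{C}-C}=O_p(n^{-1/2})$ built into $\Omega$, so that every perturbed argument differs from its target by $O_p(n^{-1/2})$ uniformly over the relevant variable on its compact support. The bias pieces then contribute the $\hbar^{\mbar}n^{-1/2}$ (resp.\ $h^{m}n^{-1/2}$) terms and the stochastic pieces the $n^{-1}\hbar^{-(\dbar+1)}\log n$ (resp.\ $n^{-1}h^{-(d+1)}$) terms.

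For the first display I would first invoke Assumption~\ref{bound} ($\gamma_1=f_V\ge c>0$) to linearize the ratio, writing $\hat\gamma_2/\hat\gamma_1-\gamma_2/\gamma_1=\gamma_1^{-1}[(\hat\gamma_2-\gamma_2)-m(\hat\gamma_1-\gamma_1)]$ plus a higher-order remainder, with $m=\gamma_2/\gamma_1$, so that it suffices to bound, for $k=1,2$, the increment $[\hat\gamma_k-\gamma_k](\hat{v})-[\hat\gamma_k-\gamma_k](v)$ where $\hat v=\hat C_z^Tz+\hat C_u^Tu_0$ and $v=C_z^Tz+C_u^Tu_0$. Splitting $\hat\gamma_k-\gamma_k=(\hat\gamma_k-E\hat\gamma_k)+(E\hat\gamma_k-\gamma_k)$, the bias term $E\hat\gamma_k-\gamma_k$ is $O(\hbar^{\mbar})$ with an $\mbar$th-order Lipschitz derivative by Assumption~\ref{main}\ref{MaZhu1}, so its increment over $\hat v-v=O_p(n^{-1/2})$ is $O_p(\hbar^{\mbar}n^{-1/2})$. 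For the centered term I would use that the product kernel is Lipschitz with $\norm{\nabla K_{\hbar}}\lesssim\hbar^{-(\dbar+1)}$ and compactly supported; a maximal inequality for the increments $K_{\hbar}(V_j-\hat v)-K_{\hbar}(V_j-v)$, together with the conditional second-moment bound of Assumption~\ref{main}\ref{MaZhu2} in the $k=2$ case, delivers the $O_p(n^{-1}\hbar^{-(\dbar+1)}\log n)$ order, the logarithm arising from the uniformity over the continuum of values $z$.

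The third display is the template for the outer step: each bracket $\tfrac1n\sum_iK_{h}(\hat C^TU_i-\hat C^Tu_0)-f_{\hat C^TU}(\hat C^Tu_0)$ is a centered kernel-density error, and I would bound the difference of the two brackets (at $\hat C$ and at $C$) by exactly the same bias/stochastic split, now with the outer kernel $K_h$ of dimension $d$ and order $m$: the bias difference is $O_p(h^{m}n^{-1/2})$ by the $m$th-order smoothness of $f_{C^TU}$ from Assumption~\ref{main}\ref{MaZhu1}, and the stochastic difference is $O_p(n^{-1}h^{-(d+1)})$ from the Lipschitz, compact-support kernel-increment bound. For the second display I would factor the summand as the weight $\hat\gamma_2(\hat C_z^TZ_i+\hat C_u^Tu_0)/\hat\gamma_1(\cdots)$ times the kernel, then add and subtract to separate the perturbation of the weight---controlled by the first display together with its uniform boundedness from Assumption~\ref{main}\ref{MaZhu2}---from the perturbation of the kernel, controlled exactly as in the third display; the remaining cross term is of strictly smaller order.

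The main obstacle is the \emph{uniformity} over the set $\Omega$, which ranges simultaneously over perturbed functions $\hat\gamma$ and perturbed matrices $(\tilde C_{zu},\tilde C)$ in $n^{-1/2}$-neighborhoods: upgrading the pointwise increment bounds to the stated uniform $O_p$ rates requires a stochastic-equicontinuity / maximal-inequality argument over this infinite-dimensional class. I would handle it with an empirical-process (bracketing or VC-type) argument that exploits the Lipschitz continuity and compact support of the product kernels to bound the relevant covering numbers, which is precisely what produces the logarithmic inflation in the first display. A secondary difficulty, specific to the second display, is that the first-step estimation error enters the second-step weight, so the two steps must be decoupled carefully---this is exactly the point at which the first display is invoked as an input.
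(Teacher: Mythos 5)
Your proposal takes essentially the same route as the paper: the paper's entire proof of this lemma is a one-sentence deferral to Lemma 3 in the supplementary materials of \citet{ma_zhu_2012}, and your argument---linearizing the ratio via Assumption 1, splitting each perturbed kernel quantity into a smooth bias increment of order $\hbar^{\mbar}n^{-1/2}$ (resp.\ $h^m n^{-1/2}$) and a centered stochastic increment controlled through the Lipschitz, compactly supported kernel with a maximal-inequality/empirical-process step supplying the $\log n$ from the supremum over $z$---is precisely a reconstruction of that Ma--Zhu argument. You in fact supply the details the paper omits (including the decoupling of the first-step weights from the outer kernel in the second display), and I see no gap in the sketch.
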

\begin{proof} 
The proof is analogous to that of Lemma 3 in the supplementary materials of \citet{ma_zhu_2012}. 
\end{proof}

Making use of the results in Lemma \ref{lemMaZhu}, we prove in the following Lemma \ref{lemSDRneg} that the estimation errors of SDR are asymptotically negligible. Note that we assume that SDR estimators of $C_{zu}$ and $C$ converge at the rate of $n^{-1/2}$. 
For the following lemmas and proofs, when necessary, we use $\hat{\psi}_{\stwo{d}}(u_0; \hat{\gamma}, \hat{C}_{zu}, \hat{C})$ to represent $\hat{\psi}_{\stwo{d}}(u_0)$ based on  $\hat{\gamma}$ and $(\hat{C}_{zu}, \hat{C})$, and use  $\hat{\psi}_{\stwo{d}}(u_0; \hat{\gamma}, C_{zu}, C)$  to represent $\hat{\psi}_{\stwo{d}}(u_0)$ based on  $\hat \gamma$ and $(C_{zu}, C)$.

\begin{lemma}\label{lemSDRneg}
If
Assumptions 
1, 
3(i)-(iii) and
4  hold, then 
\begin{align*}
 \sqrt{n\stwo{h}^{\stwo{d}}}\left[\hat{\psi}_{\stwo{d}}(u_0;\hat{\gamma}, \hat{C}_{zu},\hat{C})-\psi(u_0)\right]=\sqrt{n\stwo{h}^{\stwo{d}}}\left[\hat{\psi}_{\stwo{d}}(u_0;\hat{\gamma}, C_{zu}, C)-\psi(u_0)\right]+o_{p}(1)
 \end{align*}
\end{lemma}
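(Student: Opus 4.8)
The plan is to bound $\hat\psi_{\stwo{d}}(u_0;\hat\gamma,\hat C_{zu},\hat C)-\hat\psi_{\stwo{d}}(u_0;\hat\gamma,C_{zu},C)$ on the high-probability event that the SDR estimators lie in the neighborhood $\Omega$ of Assumption \ref{main}. Because $\hat C_{zu}$ and $\hat C$ converge at rate $n^{-1/2}$, we have $P((\hat C_{zu},\hat C)\in\Omega)\to1$, so throughout I may work on this event and invoke the uniform-over-$\Omega$ estimates of Lemma \ref{lemMaZhu}. First I write each estimator as a ratio $\hat N/\hat D$ of kernel-weighted numerator and denominator. The denominators $\hat D_{\hat C}$ and $\hat D_C$ both converge to $f_{C^TU}(C^Tu_0)$, which is bounded away from zero by Assumption \ref{bound}; hence by the identity $\hat N_{\hat C}/\hat D_{\hat C}-\hat N_C/\hat D_C=(\hat N_{\hat C}\hat D_C-\hat N_C\hat D_{\hat C})/(\hat D_{\hat C}\hat D_C)$ it suffices to show that $\sqrt{n\stwo{h}^{\stwo{d}}}$ times each of $\hat N_{\hat C}-\hat N_C$ and $\hat D_{\hat C}-\hat D_C$ is $o_p(1)$.

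Next I telescope the numerator difference through the intermediate term $A_2$ that uses $\hat C_{zu}$ in the inner ratio $\hat\gamma_2/\hat\gamma_1$ but $C$ in the outer kernel, writing $\hat N_{\hat C}-\hat N_C=(\hat N_{\hat C}-A_2)+(A_2-\hat N_C)$. In $\hat N_{\hat C}-A_2$ only the kernel argument changes from $\hat C$ to $C$, so this is exactly the quantity bounded by the second estimate of Lemma \ref{lemMaZhu}, which is $O_p(\stwo{h}^{\stwo{m}}n^{-1/2}+n^{-1}\stwo{h}^{-(\stwo{d}+1)})$ plus a conditional-expectation remainder; that remainder is $O_p(n^{-1/2})$ because the relevant conditional mean times density and the density itself are Lipschitz in $C$ by Assumption \ref{main}\ref{SDRneg} and $\|\hat C-C\|=O_p(n^{-1/2})$. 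In $A_2-\hat N_C$ only the ratio argument changes from $\hat C_{zu}$ to $C_{zu}$, and I split it using the first estimate of Lemma \ref{lemMaZhu}: the double difference is $O_p(\hbar^{\mbar}n^{-1/2}+n^{-1}\hbar^{-(\dbar+1)}\log n)$ uniformly and is carried through the kernel average, which is $O_p(1)$ by the moment bound of Assumption \ref{main}\ref{VStat}, while the leftover deterministic part $\gamma_2/\gamma_1$ evaluated at $\hat C_{zu}$ versus $C_{zu}$ is handled by a first-order Taylor expansion whose leading coefficient is controlled by the derivative bound $E[\|\partial E(Y|\tilde C_z^TZ+\tilde C_u^Tu_0)/\partial\tilde C_{zu}\|\mid C^TU=C^Tu]$ in Assumption \ref{main}\ref{SDRneg}, again giving $O_p(n^{-1/2})$. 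The denominator difference $\hat D_{\hat C}-\hat D_C$ is handled identically using the third estimate of Lemma \ref{lemMaZhu}.

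Finally I multiply every term by $\sqrt{n\stwo{h}^{\stwo{d}}}$ and check it vanishes. Using $\stwo{h}\asymp n^{-1/(2\stwo{m}+\stwo{d})}$ from Assumption \ref{par}, so that $\sqrt{n\stwo{h}^{\stwo{d}}}\asymp\stwo{h}^{-\stwo{m}}$, the outer-piece rates become $n^{-1/2}$ and $n^{(1-\stwo{m})/(2\stwo{m}+\stwo{d})}$, each $O_p(n^{-1/2})$ remainder becomes $\stwo{h}^{\stwo{d}/2}$, and all vanish since $\stwo{m}\ge2$ and $\stwo{d}>0$. For the inner piece, with $\hbar\asymp n^{-\bar s}$, the term $\hbar^{\mbar}n^{-1/2}\stwo{h}^{-\stwo{m}}$ vanishes trivially, while the dominant term $n^{-1}\hbar^{-(\dbar+1)}\log n\cdot\stwo{h}^{-\stwo{m}}$ vanishes exactly under the upper constraint $\bar s<(\stwo{m}+\stwo{d})(2\stwo{m}+\stwo{d})^{-1}(1+\dbar)^{-1}$ of Assumption \ref{par}. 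The main obstacle is precisely this bookkeeping: one must route each error contribution to the correct one of the three estimates in Lemma \ref{lemMaZhu}, keep the first-step $(\hbar,\dbar,\mbar)$ rates separate from the second-step $(\stwo{h},\stwo{d},\stwo{m})$ rates, and verify that the two-bandwidth constraints of Assumption \ref{par} are exactly what force every product to be $o_p(1)$. Assembling the pieces yields the claim.
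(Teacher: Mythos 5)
Your proposal is correct and takes essentially the same route as the paper's own proof: the same telescoping through the intermediate estimator that keeps $\hat C_{zu}$ in the inner ratio but uses $C$ in the outer kernel, the same routing of the two resulting differences to the three uniform estimates of Lemma 2, the same use of the Lipschitz and derivative bounds of Assumption 3(iii) to turn the remaining terms into $O_p(n^{-1/2})$ quantities, and the same bandwidth bookkeeping under Assumption 4 showing every product times $\sqrt{n h^{d}}$ vanishes. The only slip is cosmetic: the $O_p(1)$ bound on the outer kernel average follows from its convergence to $f_{C^T U}(C^T u_0)$ under Assumptions 1 and 3(i)--(ii), not from Assumption 3(iv), which this lemma does not even assume.
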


\begin{proof} Write
\begin{equation*}\begin{split}
& \sqrt{n\stwo{h}^{\stwo{d}}}\left[\hat{\psi}_{\stwo{d}}(u_0;\hat{\gamma},\hat{C}_{zu},\hat{C})-\psi(u_0)\right]-\sqrt{n\stwo{h}^{\stwo{d}}}\left[\hat{\psi}_{\stwo{d}}(u_0;\hat{\gamma}, C_{zu},C)-\psi(u_0)\right]\\
=&\sqrt{n\stwo{h}^{\stwo{d}}}\left[\hat{\psi}_{\stwo{d}}(u_0;\hat{\gamma}, \hat{C}_{zu},\hat{C})-\hat{\psi}_{\stwo{d}}(u_0;\hat{\gamma}, \hat{C}_{zu},C)\right]+\sqrt{n\stwo{h}^{\stwo{d}}}\left[\hat{\psi}_{\stwo{d}}(u_0;\hat{\gamma}, \hat{C}_{zu}, C)-\hat{\psi}_{\stwo{d}}(u_0;\hat{\gamma}, C_{zu},C)\right]\\
=&R_{n1}+R_{n2}
\end{split}\end{equation*}
For $R_{n1}$, 
\begin{align*}
R_{n1}=&\sqrt{n\stwo{h}^{\stwo{d}}}\left[\dfrac{n^{-1}\sum_{i=1}^{n}\dfrac{\hat{\gamma}_2(\hat{C}_z^T Z_i + \hat{C}_u u_0)}{\hat{\gamma}_1(\hat{C}_z^T Z_i + \hat{C}_u u_0)}K_{\stwo{h}}(\hat{C}^T U_i - \hat{C}^T u_0)}{n^{-1}\sum_{i=1}^{n}K_{\stwo{h}}(\hat{C}^T U_i - \hat{C}^T u_0)}\right.\\
&\left.-\dfrac{n^{-1}\sum_{i=1}^{n}\dfrac{\hat{\gamma}_2(\hat{C}_z^T Z_i + \hat{C}_u u_0)}{\hat{\gamma}_1(\hat{C}_z^T Z_i + \hat{C}_u u_0)}K_{\stwo{h}}(C^T U_i - C^T u_0)}{n^{-1}\sum_{i=1}^{n}K_{\stwo{h}}\left(C^T U_i - C^T u_0\right)}\right]\\
= &\sqrt{n\stwo{h}^{\stwo{d}}}\left[n^{-1}\sum_{i=1}^{n}\dfrac{\hat{\gamma}_2(\hat{C}_z^T Z_i + \hat{C}_u u_0)}{\hat{\gamma}_1(\hat{C}_z^T Z_i + \hat{C}_u u_0)}K_{\stwo{h}}(\hat{C}^T U_i - \hat{C}^T u_0)\right.\\
&\left.-n^{-1}\sum_{i=1}^{n}\dfrac{\hat{\gamma}_2(\hat{C}_z^T Z_i + \hat{C}_u u_0)}{\hat{\gamma}_1(\hat{C}_z^T Z_i + \hat{C}_u u_0)}K_{\stwo{h}}(C^T U_i - C^T u_0)\right]\left[n^{-1}\sum_{i=1}^{n}K_{\stwo{h}}(\hat{C}^T U_i - \hat{C}^T u_0)\right]^{-1}\\
&-\sqrt{n\stwo{h}^{\stwo{d}}}n^{-1}\sum_{i=1}^{n}\dfrac{\hat{\gamma}_2(\hat{C}_z^T Z_i + \hat{C}_u u_0)}{\hat{\gamma}_1(\hat{C}_z^T Z_i + \hat{C}_u u_0)}K_{\stwo{h}}(C^T U_i - C^T u_0)\\
&\left[n^{-1}\sum_{i=1}^{n}K_{\stwo{h}}(\hat{C}^T U_i - \hat{C}^T u_0)-n^{-1}\sum_{i=1}^{n}K_{\stwo{h}}(C^T U_i - C^T u_0)\right]\\
&\left[n^{-1}\sum_{i=1}^{n}K_{\stwo{h}}(\hat{C}^T U_i - \hat{C}^T u_0)n^{-1}\sum_{i=1}^{n}K_{\stwo{h}}(C^T U_i - C^T u_0)\right]^{-1}\\
= &S_{n1}-S_{n2}
\end{align*}
$S_{n1}$ can be further split to 
\begin{align*}
S_{n1}=&\sqrt{n\stwo{h}^{\stwo{d}}}\left\{n^{-1}\sum_{i=1}^{n}\dfrac{\hat{\gamma}_2(\hat{C}_z^T Z_i + \hat{C}_u u_0)}{\hat{\gamma}_1(\hat{C}_z^T Z_i + \hat{C}_u u_0)}K_{\stwo{h}}(\hat{C}^T U_i - \hat{C}^T u_0)-n^{-1}\sum_{i=1}^{n}\dfrac{\hat{\gamma}_2(\hat{C}_z^T Z_i + \hat{C}_u u_0)}{\hat{\gamma}_1(\hat{C}_z^T Z_i + \hat{C}_u u_0)}\right.\\
&\left.K_{\stwo{h}}(C^T U_i - C^T u_0) -E\left[\left. \dfrac{\hat{\gamma}_2(\hat{C}_z^T Z + \hat{C}_u u_0)}{\hat{\gamma}_1(\hat{C}_z^T Z + \hat{C}_u u_0)}\right\vert\hat{C}^{T}U=\hat{C}^{T}u_0\right]f_{\hat{C}^T U}(\hat{C}^{T}u_0)\right.\\
&\left.+E\left[\left. \dfrac{\hat{\gamma}_2(\hat{C}_z^T Z + \hat{C}_u u_0)}{\hat{\gamma}_1(\hat{C}_z^T Z + \hat{C}_u u_0)}\right\vert C^{T}U=C^{T}u_0\right]f_{C^T U}(C^{T}u_0)\right\}\left[n^{-1}\sum_{i=1}^{n}K_{\stwo{h}}(\hat{C}^T U_i - \hat{C}^T u_0)\right]^{-1}\\
&+\sqrt{n\stwo{h}^{\stwo{d}}}\left\{E\left[\left. \dfrac{\hat{\gamma}_2(\hat{C}_z^T Z + \hat{C}_u u_0)}{\hat{\gamma}_1(\hat{C}_z^T Z + \hat{C}_u u_0)}\right\vert\hat{C}^{T}U=\hat{C}^{T}u_0\right]f_{\hat{C}^T U}(\hat{C}^{T}u_0)-E\Bigg[\dfrac{\hat{\gamma}_2(\hat{C}_z^T Z + \hat{C}_u u_0)}{\hat{\gamma}_1(\hat{C}_z^T Z + \hat{C}_u u_0)}\right.\\
&\left.\Bigg\vert C^{T}U=C^{T}u_0\Bigg]f_{C^T U}(C^{T}u_0)\right\} \left[n^{-1}\sum_{i=1}^{n}K_{\stwo{h}}(\hat{C}^T U_i - \hat{C}^T u_0)\right]^{-1}\\
=& S_{n11}+S_{n12}
\end{align*}
By 
Lemma \ref{lemMaZhu}, 
the numerator of $S_{n11}$ is bounded by
\begin{align*}
&\sqrt{n\stwo{h}^{\stwo{d}}}\sup_{\Omega}\left|n^{-1}\sum_{i=1}^{n}\dfrac{\hat{\gamma}_1(\hat{C}_z^T Z_i + \hat{C}_u u_0)}{\hat{\gamma}_1(\hat{C}_z^T Z_i + \hat{C}_u u_0)}K_{\stwo{h}}(\hat{C}^T U_i - \hat{C}^T u_0)-n^{-1}\sum_{i=1}^{n}\dfrac{\hat{\gamma}_1(\hat{C}_z^T Z_i + \hat{C}_u u_0)}{\hat{\gamma}_1(\hat{C}_z^T Z_i + \hat{C}_u u_0)}\right.\\
&\left. K_{\stwo{h}}(C^T U_i - C^T u_0) -E\left[\left. \dfrac{\hat{\gamma}_2(\hat{C}_z^T Z + \hat{C}_u u_0)}{\hat{\gamma}_1(\hat{C}_z^T Z + \hat{C}_u u_0)}\right\vert\hat{C}^{T}U=\hat{C}^{T}u_0\right]f_{\hat{C}^T U}(\hat{C}^{T}u_0)\right.\\
&\left. +E\left[\left. \dfrac{\hat{\gamma}_2(\hat{C}_z^T Z + \hat{C}_u u_0)}{\hat{\gamma}_1(\hat{C}_z^T Z + \hat{C}_u u_0)}\right\vert C^{T}U=C^{T}u_0\right]f_{C^T U}(C^{T}u_0)\right|\\
= &O_{p}\left(\sqrt{n\stwo{h}^{\stwo{d}}}(\stwo{h}^{\stwo{m}}n^{-1/2}+n^{-1}\stwo{h}^{-(\stwo{d}+1)})\right)\\
=& o_{p}(1)
\end{align*}
Also, the denominator of $S_{n11}$ converges to $f_{C^T U}(C^{T}u_0)$. Hence $S_{n11}=o_{p}(1)$. As to $S_{n12}$, by Lipschitz continuity in 
Assumption 
 3(iii), 
\begin{align*}
&\left|E\left[\left. \dfrac{\hat{\gamma}_2(\hat{C}_z^T Z + \hat{C}_u u_0)}{\hat{\gamma}_1(\hat{C}_z^T Z + \hat{C}_u u_0)}\right\vert\hat{C}^{T}U=\hat{C}^{T}u_0\right]f_{\hat{C}^T U}(\hat{C}^{T}u_0)\right.\\
&\left.-E\left[\left. \dfrac{\hat{\gamma}_2(\hat{C}_z^T Z + \hat{C}_u u_0)}{\hat{\gamma}_1(\hat{C}_z^T Z + \hat{C}_u u_0)}\right\vert C^{T}U=C^{T}u_0\right]f_{C^T U}(C^{T}u_0)\right|\leq \Lambda \norm{\hat{C}-C}
\end{align*} 
for some constant $\Lambda>0$, hence $\abs{S_{n12}}\leq O_{p}(\sqrt{n\stwo{h}^{\stwo{d}}}n^{-1/2})=o_{p}(1)$. Then $S_{n1}=o_{p}(1)$. The proof of $S_{n2}=o_{p}(1)$ is similar. As to $R_{n2}$, 
\begin{align*}
R_{n2}=&\sqrt{n\stwo{h}^{\stwo{d}}}\left[\dfrac{n^{-1}\sum_{i=1}^{n}\dfrac{\hat{\gamma}_2(\hat{C}_z^T Z_i + \hat{C}_u u_0)}{\hat{\gamma}_1(\hat{C}_z^T Z_i + \hat{C}_u u_0)}K_{\stwo{h}}(C^T U_i - C^T u_0)}{n^{-1}\sum_{i=1}^{n}K_{\stwo{h}}(C^T U_i - C^T u_0)}\right.\\
&\left.-\dfrac{n^{-1}\sum_{i=1}^{n}\dfrac{\hat{\gamma}_2(C_z^T Z_i + C_u^T u_0)}{\hat{\gamma}_1(C_z^T Z_i + C_u^T u_0)}K_{\stwo{h}}(C^T U_i - C^T u_0)}{n^{-1}\sum_{i=1}^{n}K_{\stwo{h}}\left(C^T U_i - C^T u_0\right)}\right]\\
=&\sqrt{n\stwo{h}^{\stwo{d}}}\Bigg\{n^{-1}\sum_{i=1}^{n}\left[\dfrac{\hat{\gamma}_2(\hat{C}_z^T Z_i + \hat{C}_u u_0)}{\hat{\gamma}_1(\hat{C}_z^T Z_i + \hat{C}_u u_0)}-\dfrac{\hat{\gamma}_2(C_z^T Z_i + C_u^T u_0)}{\hat{\gamma}_1(C_z^T Z_i + C_u^T u_0)}-\dfrac{\gamma_2(\hat{C}_z^T Z_i + \hat{C}_u u_0)}{\gamma_1(\hat{C}_z^T Z_i + \hat{C}_u u_0)}\right.\\
&\left.+\dfrac{\gamma_2(C_z^T Z_i + C_u^T u_0)}{\gamma_1(C_z^T Z_i + C_u^T u_0)}\right]K_{\stwo{h}}(C^T U_i - C^T u_0)\Bigg\}\left[n^{-1}\sum_{i=1}^{n}K_{\stwo{h}}(C^T U_i - C^T u_0)\right]^{-1}\\
&+\sqrt{n\stwo{h}^{\stwo{d}}}\left\{n^{-1}\sum_{i=1}^{n}\left[\dfrac{\gamma_2(\hat{C}_z^T Z_i + \hat{C}_u u_0)}{\gamma_1(\hat{C}_z^T Z_i + \hat{C}_u u_0)}-\dfrac{\gamma_2(C_z^T Z_i + C_u^T u_0)}{\gamma_1(C_z^T Z_i + C_u^T u_0)}\right]K_{\stwo{h}}(C^T U_i - C^T u_0)\right\}\\
&\left[n^{-1}\sum_{i=1}^{n}K_{\stwo{h}}(C^T U_i - C^T u_0)\right]^{-1}\\
=&S_{n3}+S_{n4}
\end{align*}
where $\abs{S_{n3}}\leq O_{p}(\sqrt{n\stwo{h}^{\stwo{d}}}(\hbar^{\mbar}n^{-1/2}+n^{-1}\hbar^{-(\dbar+1)}\log n))=o_{p}(1)$ by the uniform convergence result from 
Lemma \ref{lemMaZhu} and 
Assumption 
4. As to $S_{n4}$, 
\begin{align*}
&E\abs[\Bigg]{\left[\dfrac{\gamma_2(\hat{C}_z^T Z_i + \hat{C}_u u_0)}{\gamma_1(\hat{C}_z^T Z_i + \hat{C}_u u_0)}-\dfrac{\gamma_2(C_z^T Z_i + C_u^T u_0)}{\gamma_1(C_z^T Z_i + C_u^T u_0)}\right]K_{\stwo{h}}(C^T U_i - C^T u_0)}\\
\leq &\sup_{u,\norm{\hat{C}_{zu}-C_{zu}}\leq cn^{-1/2}}\Bigg\{E\left[\left. \norm[\bigg]{\dfrac{\partial}{\partial C_{zu}}E[Y|C_z^{T}Z+C_u^{T}u_0]}_{C_{zu}=C_{zu}+\delta (\hat{C}_{zu}-C_{zu})}\right\vert C^{T}U=C^{T}u\right]\\
&f_{C^T U}(C^{T}u)\Bigg\}\int \abs{K(\stwo{t})}\diff \stwo{t} \norm{\hat{C}_{zu}-C_{zu}}=O_{p}(\norm{\hat{C}_{zu}-C_{zu}})
\end{align*}
where $\delta\in (0,1)$. Thus $\abs{S_{n4}}\leq O_{p}(\sqrt{n\stwo{h}^{\stwo{d}}}n^{-1/2})=o_{p}(1)$.  Therefore, $R_{n1}$ and $R_{n2}$ are both $o_{p}(1)$. This completes the proof.
\end{proof}

By Lemma \ref{lemSDRneg},  the estimation errors of SDR estimators $(\hat{C}_{zu}, \hat{C})$ have no effect on the asymptotic distribution of 
 $\hat{\psi}_d(u_0;\hat\gamma, \hat C_{zu}, \hat C)$. Hence, in the following lemmas and proofs, we assume that  $(C_{zu}, C)$ are known.
The denominator of $\hat{\psi}_{\stwo{d}}(u_0; \hat{\gamma}, C_{zu}, C)$, 
 $n^{-1}\sum_{i=1}^n K_h(C^T U_i -C^T u_0)$, 
converges in probability to $f_{C^T U}(C^Tu_0)$. By Slutsky's theorem,  we only need to prove that the numerator of $\hat{\psi}_{\stwo{d}}(u_0; \hat{\gamma}, C_{zu}, C)$ is asymptotically normal. Write the numerator of $\hat{\psi}_{\stwo{d}}(u_0; \hat{\gamma}, C_{zu}, C)$ in the form of $n^{-1}\sum_{i=1}^n \psi_n(W_i, u_0; \hat \gamma)$ with $\psi_{n}(W_{i}, u_0 ;\hat{\gamma})=\hat{\varphi}_1( Z_i , u_0)K_{\stwo{h}}(C^TU_{i}-C^Tu_0)$ and $W=(Z,U)$. In the following proofs, we distinguish the kernels by denoting $\bar{K}$ and $K$ as the kernel functions  used in the first and second steps, respectively, and we use  $c$, $\tilde{c}$ and $M$ as generic constants.
The following Lemmas \ref{lemnonlin} - \ref{lemlin} together prove that 
$$\sqrt{n\stwo{h}^{\stwo{d}}}\dfrac{1}{n}\sum_{i=1}^{n}\Big[ \psi_{n}(W_{i}, u_0 ;\hat{\gamma})-\psi_{n}(W_{i}, u_0 ;\gamma)\Big]=o_{p}(1),$$
which means that the convergence rate of a two-step estimator is not directly affected by the kernel estimation of the inner layer, but by the kernel estimation of the outer layer.

\begin{lemma}\label{lemnonlin}
If
Assumptions 
1, 
2, 
3(ii) and 
4 hold, then 
\begin{align*}
T_{n1}=\sqrt{n\stwo{h}^{\stwo{d}}}\dfrac{1}{n}\sum_{i=1}^{n}\Big[ \psi_{n}(W_i, u_0 ;\hat{\gamma})-\psi_{n}(W_i, u_0 ;\gamma)-G_{n}(W_i, u_0 ;(\hat{\gamma}-\gamma))\Big]=o_{p}(1),
\end{align*}
where 
\begin{align*}
&G_{n}(W_{i}, u_0 ; \eta)=\dfrac{K_{\stwo{h}}(C^T U_i - C^T u_0)}{\gamma_1(C_z^T Z_i + C_u^T u_0 )}\left[\eta_2(C_z^T Z_i + C_u^T u_0)-\dfrac{\gamma_2(C_z^T Z_i + C_u^T u_0)}{\gamma_1(C_z^T Z_i + C_u^T u_0)}\eta_1 (C_z^T Z_i + C_u^T u_0)\right]
\end{align*}
with some functions $\eta_j(\cdot)$, $j=1,2$. 
\end{lemma}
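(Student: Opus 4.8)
The plan is to read $T_{n1}$ as the second-order (quadratic) remainder of the linearization of the ratio $\hat\gamma_2/\hat\gamma_1$ about $\gamma_2/\gamma_1$, and then to show that this remainder, even after multiplication by the normalizing factor $\sqrt{n\stwo{h}^{\stwo{d}}}$, is annihilated by the uniform convergence rate of $\hat\gamma$ together with the bandwidth restrictions of Assumptions 2 and 4. First I would abbreviate $v_i=C_z^TZ_i+C_u^Tu_0$ and note that $\psi_n(W_i,u_0;\hat\gamma)-\psi_n(W_i,u_0;\gamma)$ and $G_n(W_i,u_0;\hat\gamma-\gamma)$ share the common factor $K_{\stwo{h}}(C^TU_i-C^Tu_0)$. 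Subtracting the linear piece $G_n$ (which is exactly the first-order term in $(\hat\gamma_1-\gamma_1,\hat\gamma_2-\gamma_2)$) leaves the exact algebraic identity
\begin{align*}
R_i &= \frac{\hat\gamma_2(v_i)}{\hat\gamma_1(v_i)}-\frac{\gamma_2(v_i)}{\gamma_1(v_i)}-\frac{1}{\gamma_1(v_i)}\left[\hat\gamma_2(v_i)-\gamma_2(v_i)-\frac{\gamma_2(v_i)}{\gamma_1(v_i)}\big(\hat\gamma_1(v_i)-\gamma_1(v_i)\big)\right]\\
&=-\frac{\big(\hat\gamma_2(v_i)-\gamma_2(v_i)\big)\big(\hat\gamma_1(v_i)-\gamma_1(v_i)\big)}{\gamma_1(v_i)\,\hat\gamma_1(v_i)}+\frac{\gamma_2(v_i)\big(\hat\gamma_1(v_i)-\gamma_1(v_i)\big)^2}{\gamma_1(v_i)^2\,\hat\gamma_1(v_i)},
\end{align*}
which is manifestly quadratic in $\hat\gamma-\gamma$, so that $T_{n1}=\sqrt{n\stwo{h}^{\stwo{d}}}\,n^{-1}\sum_i K_{\stwo{h}}(C^TU_i-C^Tu_0)R_i$.

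Next I would bound $\max_i|R_i|$ uniformly using the sup-norm convergence $\norm{\hat\gamma-\gamma}_\infty=O_p(a_n+\hbar^{\mbar})$ recorded after Assumption 2, where $a_n=(\log n/(n\hbar^{\dbar}))^{1/2}$. On the event $\{\norm{\hat\gamma-\gamma}_\infty\le\epsilon\}$, whose probability tends to one since this norm is $o_p(1)$, Assumption 1 gives $\gamma_1=f_V\ge c>0$ and hence $\hat\gamma_1\ge c-\epsilon>0$ for $\epsilon$ small; combined with the boundedness of $\gamma_2=E(Y|V)f_V$ coming from the moment and Lipschitz conditions of Assumptions 2 and 3, the coefficients multiplying the quadratic terms in $R_i$ are uniformly bounded in $v$. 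Therefore there is a constant $M$ with $\max_i|R_i|\le M\norm{\hat\gamma-\gamma}_\infty^2=O_p\big((a_n+\hbar^{\mbar})^2\big)$.

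Finally I would dispose of the kernel average and carry out the rate bookkeeping. Since $\kappa$ is bounded with compact support, $E|K_{\stwo{h}}(C^TU-C^Tu_0)|\to f_{C^TU}(C^Tu_0)\big(\int|\kappa|\big)^{\stwo{d}}=O(1)$, so by the law of large numbers $n^{-1}\sum_i|K_{\stwo{h}}(C^TU_i-C^Tu_0)|=O_p(1)$. Combining the three pieces,
\begin{align*}
|T_{n1}|\le\sqrt{n\stwo{h}^{\stwo{d}}}\left(\frac1n\sum_{i=1}^n\big|K_{\stwo{h}}(C^TU_i-C^Tu_0)\big|\right)\max_i|R_i|=O_p\!\left(\sqrt{n\stwo{h}^{\stwo{d}}}\,(a_n+\hbar^{\mbar})^2\right).
\end{align*}
I expect the only genuine obstacle to be checking that this scaled quadratic remainder vanishes, i.e.\ that the bandwidth window of Assumption 4 is exactly wide enough. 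With $\stwo{h}=\lambda^{2/(2\stwo{m}+\stwo{d})}n^{-1/(2\stwo{m}+\stwo{d})}$ one has $\sqrt{n\stwo{h}^{\stwo{d}}}=O(n^{\stwo{m}/(2\stwo{m}+\stwo{d})})$, so it suffices that both $n^{\stwo{m}/(2\stwo{m}+\stwo{d})}a_n^2$ and $n^{\stwo{m}/(2\stwo{m}+\stwo{d})}\hbar^{2\mbar}$ tend to zero. Writing $\hbar=n^{-\bar s}$, the first equals $\log n\cdot n^{\stwo{m}/(2\stwo{m}+\stwo{d})-1+\bar s\dbar}$, which vanishes because the upper bound $\bar s<(\stwo{m}+\stwo{d})(2\stwo{m}+\stwo{d})^{-1}(1+\dbar)^{-1}$ forces $\bar s\dbar<(\stwo{m}+\stwo{d})/(2\stwo{m}+\stwo{d})$; the second equals $n^{\stwo{m}/(2\stwo{m}+\stwo{d})-2\mbar\bar s}$, which vanishes because the lower bound $\bar s>\mbar^{-1}\stwo{m}(2\stwo{m}+\stwo{d})^{-1}$ gives $2\mbar\bar s>\stwo{m}/(2\stwo{m}+\stwo{d})$ with room to spare. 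Hence $T_{n1}=o_p(1)$; the algebra and the uniform bound are routine once Assumption 1 secures the denominators, and the heart of the argument is precisely this rate verification, which formalizes that the first-step (inner) estimation error enters only at second order.
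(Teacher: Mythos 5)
Your proposal is correct and follows essentially the same route as the paper's proof: both read $T_{n1}$ as the quadratic remainder of linearizing $\hat{\gamma}_2/\hat{\gamma}_1$, use Assumption 1 plus uniform convergence to keep the denominators $\gamma_1,\hat{\gamma}_1$ bounded below, control the remainder by $\norm{\hat{\gamma}-\gamma}_{\infty}^2=O_p\big((a_n+\hbar^{\mbar})^2\big)$, and verify via the bandwidth conditions of Assumptions 2 and 4 that $\sqrt{n\stwo{h}^{\stwo{d}}}\,(a_n+\hbar^{\mbar})^2=o_p(1)$. The only cosmetic difference is bookkeeping of the kernel factor — the paper bounds $E\abs{K_{\stwo{h}}}\cdot(\cdot)$ via a supremum over conditional expectations and applies Chebyshev's inequality, while you bound $\max_i\abs{R_i}$ pointwise and treat $n^{-1}\sum_i\abs{K_{\stwo{h}}}$ as $O_p(1)$ — and your explicit rate verification is exactly the step the paper leaves implicit.
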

\begin{proof}
From 
Assumption 
1, $\gamma_1$ is bounded away from zero. Since $\hat{\gamma}_1$ converges to $\gamma_1$ uniformly, when $n$ is large enough,  $\hat{\gamma}_1$ is also bounded away from zero, i.e. both $\inf \hat{\gamma}_1$ and $\inf \gamma_1\geq c$. Then by 
Assumptions 
2, 
3(ii) and 
4,
\begin{align*}
&\sqrt{n\stwo{h}^{\stwo{d}}}E\bigg[\abs[\Big]{\psi_{n}(W_{i}, u_0 ;\hat{\gamma})-\psi_{n}(W_{i}, u_0 ;\gamma)-G_{n}(W_{i}, u_0 ;(\hat{\gamma}-\gamma))}\bigg]\\
=&\sqrt{n\stwo{h}^{\stwo{d}}}E\left[\abs{K_{\stwo{h}}(C^T U_i - C^T u_0)}\dfrac{1}{\hat{\gamma}_{1}\gamma_1}\left(1+\abs[\bigg]{\dfrac{\gamma_2}{\gamma_1}}\right)\norm{\hat{\gamma}-\gamma}^2\right]\\
\leq &c^{-2}\sqrt{n\stwo{h}^{\stwo{d}}}E\left[\abs{K_{\stwo{h}}(C^T U_i - C^T u_0)}\left(1+\abs[\bigg]{\dfrac{\gamma_2}{\gamma_1}}\right)\norm{\hat{\gamma}-\gamma}^2 \right]\\
=&c^{-2} E\Big[\abs{K_{\stwo{h}}(C^T U_i - C^T u_0)}\left(1+\abs{E(Y|V=C_z^T Z_i + C_u^T u_0)}\right)\Big]\sqrt{n\stwo{h}^{\stwo{d}}}\norm{\hat{\gamma}-\gamma}_{\infty}^2\\
\leq & c^{-2}\sup_{u}\bigg\{ E\Big[ 1+\abs{E(Y|V=C_z^T Z_i + C_u^T u_0)} \mid C^T U_i=C^T u\Big]f_{C^T U}(C^T u)\bigg\}\int \abs{K(\stwo{t})}\diff \stwo{t}\\
& \sqrt{n\stwo{h}^{\stwo{d}}}\norm{\hat{\gamma}-\gamma}_{\infty}^2\\
=& o_{p}(1)
\end{align*}
By Chebyshev's inequality, $T_{n1}=o_{p}(1)$.
\end{proof}

\begin{lemma}\label{lemVStat}
Let $G_{n}(W_{i}, u_0 ;\eta)$ be as defined in 
Lemma \ref{lemnonlin}. If
Assumptions 
1, 
2, 
3(ii) (iv) and 
4 hold, then
\begin{align*}
T_{n2}=&\sqrt{n\stwo{h}^{\stwo{d}}}\left[\dfrac{1}{n}\sum_{i=1}^{n}G_{n}(W_i, u_0 ;(\hat{\gamma}-\gamma))-\int G_{n}(w, u_0 ;(\hat{\gamma}-\gamma))\diff F\right]=o_{p}(1)
\end{align*}
where $F$ is the cdf of $W$.
\end{lemma}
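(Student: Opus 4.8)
The plan is to exploit the linearity of $G_n(W,u_0;\cdot)$ in its functional argument, together with the explicit kernel form of $\hat\gamma$, to recognize $T_{n2}$ as a scaled, nearly degenerate V-statistic. First I would write $\hat\gamma_k(v)-\gamma_k(v)=[\hat\gamma_k(v)-E\hat\gamma_k(v)]+[E\hat\gamma_k(v)-\gamma_k(v)]$, where the second bracket is the deterministic first-step bias $b_k(v)=O(\hbar^{\mbar})$ under the smoothness in Assumption \ref{UnifConv}, and the first bracket equals $n^{-1}\sum_{j=1}^n[g_k(W_j,v)-Eg_k(W,v)]$ with $g_1(W_j,v)=\bar K_{\hbar}(C_z^TZ_j+C_u^TU_j-v)$ and $g_2(W_j,v)=Y_j\bar K_{\hbar}(C_z^TZ_j+C_u^TU_j-v)$. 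Substituting $\eta=\hat\gamma-\gamma$ into $G_n$ and using linearity splits $n^{-1}\sum_i G_n(W_i,u_0;\hat\gamma-\gamma)$ into a double sum $n^{-2}\sum_{i,j}H(W_i,W_j)$, where $H(w,w')=G_n(w,u_0;g(w',\cdot)-Eg(W,\cdot))$, plus a single-index bias term $n^{-1}\sum_i G_n(W_i,u_0;b)$; the integral $\int G_n(w,u_0;\hat\gamma-\gamma)\,dF$ likewise equals $n^{-1}\sum_j\bar H(W_j)+E[G_n(W,u_0;b)]$ with $\bar H(w')=E_W[H(W,w')]$.

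The key structural observation is that $H$ is automatically centered in its second argument, since $E_{W'}[g(W',\cdot)-Eg(W,\cdot)]=0$ and $G_n(w,u_0;\cdot)$ is linear, so $E_{W'}[H(w,W')]=0$; after subtracting the first-argument projection $\bar H$, the centered kernel $\tilde H=H-\bar H$ is mean-zero in \emph{both} arguments. Consequently
\[
\frac{T_{n2}}{\sqrt{n\stwo{h}^{\stwo{d}}}}=\underbrace{\frac{1}{n^2}\sum_{i,j}\tilde H(W_i,W_j)}_{A_n}+\underbrace{\Big(\frac{1}{n}\sum_i G_n(W_i,u_0;b)-E[G_n(W,u_0;b)]\Big)}_{B_n}.
\]
I would then separate $A_n$ into its off-diagonal part, a completely degenerate second-order U-statistic, and the diagonal $n^{-2}\sum_i\tilde H(W_i,W_i)$, and treat $B_n$ as a centered i.i.d. average.

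The conclusion follows from variance bounds on each piece. For the off-diagonal part, full degeneracy gives $\mathrm{Var}=O(n^{-2}E[\tilde H^2])$, and since the outer kernel contributes $\int K^2=O(\stwo{h}^{-\stwo{d}})$ and the inner kernel contributes $\int\bar K^2=O(\hbar^{-\dbar})$, Assumption \ref{bound} (so $\gamma_1=f_V\ge c$) and Assumption \ref{main}\ref{MaZhu2} (bounded conditional second moments of $Y$) yield $E[\tilde H^2]=O(\stwo{h}^{-\stwo{d}}\hbar^{-\dbar})$; after scaling by $\sqrt{n\stwo{h}^{\stwo{d}}}$ this part is $O_p(n^{-1/2}\hbar^{-\dbar/2})=O_p(n^{-1/2+\bar s\dbar/2})=o_p(1)$ because $\bar s\dbar<1$ by Assumption \ref{UnifConv}. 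The diagonal term is $O_p(n^{-1}\hbar^{-\dbar})$ after bounding $\bar K$ by its sup-norm and using Assumption \ref{main}\ref{VStat} to control $E(|Y|\mid\cdot)$; scaled, it is $O_p(n^{-1/2}\stwo{h}^{\stwo{d}/2}\hbar^{-\dbar})=o_p(1)$ by the upper bound $\bar s<\stwo{m}(2\stwo{m}+\stwo{d})^{-1}\dbar^{-1}$ in Assumption \ref{par}. Finally, $b=O(\hbar^{\mbar})$ gives $\mathrm{Var}(B_n)=O(n^{-1}\stwo{h}^{-\stwo{d}}\hbar^{2\mbar})$, so $\sqrt{n\stwo{h}^{\stwo{d}}}\,B_n=O_p(\hbar^{\mbar})=o_p(1)$.

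The main obstacle is that $\hat\gamma$ depends on the entire sample, so the summands $G_n(W_i,u_0;\hat\gamma-\gamma)$ are not independent and no elementary law of large numbers applies to $T_{n2}$ directly; the resolution is precisely the double degeneracy of $\tilde H$, which drives the U-statistic fluctuation down to the smaller order $n^{-2}E[\tilde H^2]$. The secondary difficulty is the two-bandwidth bookkeeping: the blow-up factors $\hbar^{-\dbar}$ and $\stwo{h}^{-\stwo{d}}$ must be dominated after multiplication by $\sqrt{n\stwo{h}^{\stwo{d}}}$, which is exactly what the calibrated ranges of $\bar s$ in Assumptions \ref{UnifConv} and \ref{par} are designed to guarantee.
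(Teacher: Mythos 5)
Your proof is correct and follows essentially the same route as the paper's: the same split of $\hat{\gamma}-\gamma$ into the stochastic part $\hat{\gamma}-E[\hat{\gamma}]$ and the smoothing bias $E[\hat{\gamma}]-\gamma$, a second-order V-statistic treatment of the former (diagonal term plus degenerate off-diagonal term, with the same $\hbar^{-\dbar}$ and $\stwo{h}^{-\stwo{d}}$ bookkeeping and the same use of Assumptions \ref{UnifConv} and \ref{par} to kill each piece after the $\sqrt{n\stwo{h}^{\stwo{d}}}$ scaling), and a Chebyshev argument giving $O_{p}(\hbar^{\mbar})$ for the bias part, exactly as in the paper's $T_{n21}$/$T_{n22}$ decomposition. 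The only cosmetic difference is that you re-derive the degenerate U-statistic variance bound by hand via the double centering of $\tilde{H}$, where the paper simply invokes Lemma 8.4 of \citet{newey_mcfadden_1994}.
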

\begin{proof}
Let $\bar{\gamma}(v)=E[\hat{\gamma}(v)]$, then
\begin{align*}
T_{n2}=&\sqrt{n\stwo{h}^{\stwo{d}}}\left[\dfrac{1}{n}\sum_{i=1}^{n}G_{n}(W_i, u_0 ;(\hat{\gamma}-\gamma))-\int G_{n}(w, u_0 ;(\hat{\gamma}-\gamma))\diff F\right]\\
=&\sqrt{n\stwo{h}^{\stwo{d}}}\left[\dfrac{1}{n}\sum_{i=1}^{n}G_{n}(W_i, u_0 ;(\hat{\gamma}-\bar{\gamma}))-\int G_{n}(w, u_0 ;(\hat{\gamma}-\bar{\gamma}))\diff F\right]\\
&+\sqrt{n\stwo{h}^{\stwo{d}}}\left[\dfrac{1}{n}\sum_{i=1}^{n}G_{n}(W_i, u_0 ;(\bar{\gamma}-\gamma))-\int G_{n}(w, u_0 ;(\bar{\gamma}-\gamma))\diff F\right]\\
=&T_{n21}+T_{n22}
\end{align*}
We only need to prove that $$T_{n21}=\sqrt{n\stwo{h}^{\stwo{d}}}\left[\dfrac{1}{n}\sum_{i=1}^{n}G_{n}(W_i, u_0 ;(\hat{\gamma}-\bar{\gamma}))-\int G_{n}(w, u_0 ;(\hat{\gamma}-\bar{\gamma}))\diff F\right]$$ and $$T_{n22}=\sqrt{n\stwo{h}^{\stwo{d}}}\left[\dfrac{1}{n}\sum_{i=1}^{n}G_{n}(W_i, u_0 ;(\bar{\gamma}-\gamma))-\int G_{n}(w, u_0 ;(\bar{\gamma}-\gamma))\diff F\right]$$ are both $o_{p}(1)$.
As $T_{n21}$ can be written in such way that
$$T_{n21}=n^{-2}\sum_{i=1}^{n}\sum_{j=1}^{n}m_{n}({W_i,W_j})-n^{-1}\sum_{i=1}^{n}[m_{n1}(W_i)+m_{n2}(W_i)]+\mu$$
where 
\begin{align*}
&m_{n}(W_i,W_j)\\
&=\dfrac{K_{\stwo{h}}(C^T U_i -C^T u_0)}{\gamma_1(C_z^T Z_i + C_u^T u_0)}\left[Y_j-\dfrac{\gamma_2 (C_z^T Z_i + C_u^T u_0)}{\gamma_1(C_z^T Z_i + C_u^T u_0)}\right]\bar{K}_{\hbar}(C_z^T Z_i + C_u^T u_0-C_z^T Z_j - C_u^T U_j),
\end{align*}
$m_{n1}(W_i)=\int m_{n}(W_i,w)\diff F$, $m_{n2}(W_i)=\int m_{n}(w,W_i)\diff F$ and $\mu=E[m_{n}(W_1,W_2)]$.
Result from Lemma 8.4 in \citet{newey_mcfadden_1994} concerning V-statistics convergence is applied directly. Lemma 8.4 states that if $W_{1},W_{2}...,W_{n}$ are i.i.d then 
\begin{align*}
&n^{-2}\sum_{i=1}^{n}\sum_{j=1}^{n}m_{n}({W_i,W_j})-n^{-1}\sum_{i=1}^{n}[m_{n1}(W_i)+m_{n2}(W_i)]+\mu\\
&=O_{p}\left(n^{-1}E[\abs{m_{n}(W,W)}]+n^{-1}E^{1/2}[\abs{m_{n}(W_1,W_2)}^2]\right)
\end{align*}
In $T_{n21}$, 
\begin{align*}
&E[\abs{m_{n}(W,W)}]\\
\leq & E\left\{\abs[\bigg]{\dfrac{K_{\stwo{h}}(C^T U -C^T u_0)}{\gamma_1(C_z^T Z + C_u^T u_0)}\left[Y-\dfrac{\gamma_2(C_z^T Z + C_u^T u_0)}{\gamma_1(C_z^T Z + C_u^T u_0)}\right]\bar{K}_{\hbar}(0+(C_u^T U - C_u^T u_0))}\right\}\\
\leq & c^{-1} E\left\{\abs{K_{\stwo{h}}(C^T U -C^T u_0)\bar{K}_{\hbar}(C_u^T U -C_u^T u_0)}\left[\abs{Y}+\abs[\bigg]{\dfrac{\gamma_2(C_z^T Z + C_u^T u_0)}{\gamma_1(C_z^T Z + C_u^T u_0)}}\right]\right\}\\
= & c^{-1} E\bigg\{\abs{K_{\stwo{h}}(C^T U -C^T u_0)\bar{K}_{\hbar}(C_u^T U -C_u^T u_0)}\Big[E(\abs{Y}|C_u^T U,C^T U)\\
&+E\left(\abs{E(Y|V=C_z^T Z + C_u^T u_0)} \middle\vert C^T U\right)\Big]\bigg\}
\end{align*}
Note that $\mathcal{S}(C_u)$  and $\mathcal{S}(C)$ may overlap and it is necessary to find out the basis of $\mathcal{S}(C_{u})\oplus \mathcal{S}(C)$ before calculating the expectation.
Split the columns of $C_u$ and $C$ into two parts such that $C_u=(C_u^*,C_u^{**})$ and  $C=(C^*,C^{**})$, where the columns of $C_u^*$ and $C^*$ (with the smallest possible column dimensions) together form the basis of the space $\mathcal{S}(C_{u})\oplus \mathcal{S}(C)$. In this case, the columns of $C_u^{**}$(or $C^{**}$) can be written as linear combinations of the columns of $C_u^*$ and $C^*$.
Let $\dbar^{**}$ and $\stwo{d}^{**}$ be the column dimensions of $C_u^{**}$ and $C^{**}$ respectively. 
As the kernels are bounded from above, then there exists $M>0$ such that 
$\hbar^{-\dbar^{**}}\abs{\bar{K}\left((C_u^{**T} u-C_u^{**T} u_0)/\hbar\right)}
\leq \hbar^{-\dbar^{**}}M\leq \hbar^{-\dbar}M$ and 
$\stwo{h}^{-\stwo{d}^{**}}\abs{K\left((C^{**T} u-C^{**T} u_0)/\stwo{h}\right)}\leq \stwo{h}^{-\stwo{d}^{**}}M\leq \stwo{h}^{-\stwo{d}}M$. By 
Assumptions 
1 and 
3
(ii),
(iv),
\begin{align*}
 &E[\abs{m_{n}(W,W)}]\leq E\bigg\{\abs[\bigg]{\dfrac{K_{\stwo{h}}(C^T U - C^T u_0)}{\gamma_1(C_z^T Z+C_u^T u_0)}\left[Y-\dfrac{\gamma_2(C_z^T Z+C_u^T u_0)}{\gamma_1(C_z^T Z+C_u^T u_0)}\right]\bar{K}_{\hbar}(0+C_u^T U-C_u^T u_0)}\bigg\}\\
\leq  & c^{-1} \iint\abs{\bar{K}_{\hbar}(C_u^{*T} u-C_u^{*T} u_0)K_{\stwo{h}}(C^{*T} u-C^{*T} u_0)\bar{K}_{\hbar}(C_u^{**T} u-C_u^{**T} u_0)K_{\stwo{h}}(C^{**T} u-C^{**T} u_0)}\\
&\Big[E(\abs{Y}|C_u^T U=C_u^T u,C^T U=C^T u)+E\left(\abs{E(Y|V=C_z^T Z+C_u^T u_0)} \middle\vert C^T U=C^T u\right)\Big]\\
&f_{C_u^{*T}U, C^{*T}U}(C_u^{*T}u, C^{*T}u)\diff (C_u^{*T}u)\diff (C^{*T}u)\\
\leq & c^{-1}  M^2 \hbar^{-\dbar} h^{-\stwo{d}}\sup_{u}\bigg\{ \Big[E(\abs{Y}|C_u^{T} U=C_u^{T}u,C^{T} U=C^{T} u)+E\left(\abs{E(Y|V=C_z^T Z+C_u^T u_0)}\right. \\
&\left.\middle\vert C^{T} U=C^{T} u \right)\Big]f_{C_u^{*T} U,C^{*T} U}(C_u^{*T} u,C^{*T} u)\bigg\}\int \abs{\bar{K}(\bar{t}^*)}\diff \bar{t}^*\int\abs{K(\stwo{t}^*)}\diff \stwo{t}^*=O_p(\hbar^{-\dbar}h^{-\stwo{d}})
\end{align*}
where $\bar{t}^*\in\mathbb{R}^{\dbar-\dbar^{**}}$ and $\stwo{t}^*\in\mathbb{R}^{\stwo{d}-\stwo{d}^{**}}$.
%
As to $E[\abs{m_{n}(W_1,W_2)}^2]$,
\begin{align*}
& E\left[\abs{m_{n}(W_1,W_2)}^2\right]\\
\leq & E\Bigg\{\abs[\bigg]{\dfrac{K_{\stwo{h}}(C^T U_1-C^T u_0)}{\gamma_1(C_z^T Z_1+C_u^T u_0)}\left[Y_2-\dfrac{\gamma_2(C_z^T Z_1+C_u^T u_0)}{\gamma_1(C_z^T Z_1+C_u^T u_0)}\right]\bar{K}_{\hbar}(C_z^T Z_1+C_u^T u_0-V_2)}^2\Bigg\}\\
\leq & I_{n1}+I_{n2}
\end{align*}
where the first term
\begin{align*}
I_{n1}=&E\left\{\abs[\bigg]{\dfrac{K_{\stwo{h}}(C^T U_1-C^T u_0)}{\gamma_1(C_z^T Z_1+C_u^T u_0)}Y_2\bar{K}_{\hbar}(C_z^T Z_1+C_u^T u_0-V_2)}^2\right\}\\
\leq &c^{-2} E_{W_1}\bigg\{E_{W_2}\bigg[K^2_{\stwo{h}}(C^T U_1-C^T u_0)Y_2^2 \bar{K}_{\hbar}^2(C_z^T Z_1+C_u^T u_0-V_2)\bigg]\bigg\}\\
= &c^{-2}  E_{W_1}\bigg\{ K^2_{h}(C^T U_1-C^T u_0)E_{V_2}\bigg[E\left(Y_2^2|V_2\right)\bar{K}_{\hbar}^2(C_z^T Z_1+C_u^T u_0-V_2)\bigg]\bigg\}\\
  \leq &\tilde{c} \hbar^{-\dbar}h^{-\stwo{d}}  \sup_{v}\left\{E\left(Y^2|V=v\right)f_V(v)\right\}\sup_{u}\left\{f_{C^T U}(C^T u)\right\}\int \abs{\bar{K}(\bar{t})}\diff \bar{t} \int \abs{K(\stwo{t})} \diff \stwo{t}  \\
=&O_{p}(\hbar^{-\dbar}h^{-\stwo{d}})
\end{align*}
Similarly $I_{n2}\leq O_{p}(\hbar^{-\dbar}h^{-\stwo{d}})$. 
In all,  by 
Assumption 
4,
\begin{align*}
T_{n21}= &\sqrt{n\stwo{h}^{\stwo{d}}}O_{p}\left(n^{-1}E[\abs{m_{n}(W,W)}]+n^{-1}E[\abs{m_{n}(W_1,W_2)}^2]^{1/2}\right)\\
\leq & O_{p}\left(\sqrt{n\stwo{h}^{\stwo{d}}}\hbar^{-\dbar}h^{-\stwo{d}}n^{-1}\right)=o_{p}(1).
\end{align*}
As to $T_{n22}$, by Chebychev's Inequality,  
Assumptions 
1, 
3
(ii) and 
4, since $E(T_{n22})=0$,
\begin{align*}
&P\left(\abs[\Bigg]{\sqrt{n\stwo{h}^{\stwo{d}}}\left[\dfrac{1}{n}\sum_{i=1}^{n}G_{n}(W_{i},u_0; \bar{\gamma}-\gamma)-\int G_{n}(w,u_0;\bar{\gamma}-\gamma)\diff F\right]}>\epsilon\right)\\
=& P\left(\abs[\Bigg]{\sqrt{n\stwo{h}^{\stwo{d}}}\left[\dfrac{1}{n}\sum_{i=1}^{n}G_{n}(W_{i},u_0; \bar{\gamma}-\gamma)-\int G_{n}(w,u_0;\bar{\gamma}-\gamma)\diff F\right]}^2>\epsilon^2\right)\\
\leq & \dfrac{n\stwo{h}^{\stwo{d}}}{n^2\epsilon^2}\Bigg\{n(n-1)\bigg|E\bigg[G_{n}(W_{i},u_0; \bar{\gamma}-\gamma)-\int G_{n}(w,u_0;\bar{\gamma}-\gamma)\diff F\bigg]\bigg|^2\\
&+n E\left[\abs[\bigg]{G_{n}(W_i,u_0; \bar{\gamma}-\gamma)-\int G_{n}(w,u_0;\bar{\gamma}-\gamma)\diff F}^2\right]\Bigg\}\\
=&\stwo{h}^{\stwo{d}} \epsilon^{-2}E\left[\abs[\bigg]{G_{n}(W_{i},u_0; \bar{\gamma}-\gamma)-\int G_{n}(w,u_0;\bar{\gamma}-\gamma)\diff F}^2\right]\\
\leq & \stwo{h}^{\stwo{d}} \epsilon^{-2}E[\abs{G_{n}(W_{i},u_0; \bar{\gamma}-\gamma)}^2]\\
\leq & \stwo{h}^{\stwo{d}} \epsilon^{-2}E\left\{\dfrac{K^2_{h}(C^T U_{i}-C^T u_0)}{\gamma_1^2(C_z^T Z_{i}+C_u^T u_0)}\abs[\bigg]{\left[-\dfrac{\gamma_2(C_z^T Z_{i}+C_u^T u_0)}{\gamma_1(C_z^T Z_{i}+C_u^T u_0)},1\right] \left[\bar{\gamma}_1-\gamma_1,\bar{\gamma}_2-\gamma_2\right]}^2 \right\}\\
\leq & c^{-2} \epsilon^{-2} \stwo{h}^{\stwo{d}} E\left\{K^2_{h}(C^T U_{i}-C^T u_0)E\left[\left. \abs[\bigg]{\dfrac{\gamma_2(C_z^T Z_{i}+C_u^T u_0)}{\gamma_1(C_z^T Z_{i}+C_u^T u_0)}}^2+1\right\vert C^T U_{i}\right]\right\}\norm{\bar{\gamma}-\gamma_{0}}_{\infty}^2\\
 = & \tilde{c} \epsilon^{-2}\sup_{u}\bigg\{E\Big[\left. \abs{E[Y|V=C_z^T Z_{i}+C_u^T u_0]}^2+1\right\vert C^T U_{i}=C^T u\Big]f_{C^T U}(C^T u)\bigg\}\\
 &\int \abs{K(\stwo{t})}\diff \stwo{t} \norm{\bar{\gamma}-\gamma_{0}}_{\infty}^2\to_{p} 0,
\end{align*}
so $T_{n22}=o_{p}(1)$.
\end{proof}

For the following proof, we denote
$$s_{n}(V)=\int K_{\stwo{h}}(C^T u-C^T u_0)f_{C_z^T Z,C^T U}(V-C_u^T u_0, C^T u)\diff (C^T u) \enskip f_V^{-1}(V) [-E(Y|V),1]$$
and $S_{n}(Y,V)=s_{n}(V)(1,Y)^T$.

\begin{lemma}\label{lemlin}
If
Assumptions 
1, 
2, 
3
(ii) and 
4 hold, then
\begin{align*}
&T_{n3}=\sqrt{n\stwo{h}^{\stwo{d}}}\left[\int G_{n}(w, u_0; (\hat{\gamma}-\gamma))\diff F-\dfrac{1}{n}\sum_{j=1}^{n} S_{n}(Y_j, V_j)\right]=o_{p}(1)
\end{align*}
\end{lemma}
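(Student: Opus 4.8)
The plan is to exploit the linearity of $G_n(w,u_0;\cdot)$ in its functional argument together with the explicit sample-average form of $\hat\gamma$, turning the linear functional $\int G_n(w,u_0;\hat\gamma-\gamma)\diff F$ into a genuine sample average whose summands match $S_n(Y_j,V_j)$ up to a first-step kernel bias of order $\hbar^{\mbar}$. First I would insert $\eta=\hat\gamma-\gamma$ with $\hat\gamma_1(v)=n^{-1}\sum_j\bar K_{\hbar}(C_z^TZ_j+C_u^TU_j-v)$ and $\hat\gamma_2(v)=n^{-1}\sum_j Y_j\bar K_{\hbar}(C_z^TZ_j+C_u^TU_j-v)$. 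The decisive simplification is that the population contribution cancels, since $\gamma_2(v)-[\gamma_2(v)/\gamma_1(v)]\gamma_1(v)=0$ and $\gamma_2/\gamma_1=E(Y\mid V=v)$; by linearity the finite average over $j$ then factors out of the population integral against $F$, giving the exact identity
\[
\int G_n(w,u_0;\hat\gamma-\gamma)\diff F=\frac1n\sum_{j=1}^n h_{nj},\qquad V_j=C_z^TZ_j+C_u^TU_j,
\]
where
\[
h_{nj}=\int \frac{K_{\stwo{h}}(C^Tu-C^Tu_0)}{\gamma_1(C_z^Tz+C_u^Tu_0)}\big[Y_j-E(Y\mid V=C_z^Tz+C_u^Tu_0)\big]\bar K_{\hbar}(V_j-C_z^Tz-C_u^Tu_0)\diff F(z,u).
\]

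Next, since this integrand depends on $(z,u)$ only through $(C_z^Tz,C^Tu)$, I would marginalize $F$ to the joint density $f_{C_z^TZ,C^TU}$ and substitute $v=C_z^Tz+C_u^Tu_0$ in the $C_z^Tz$-direction, obtaining
\[
h_{nj}=\iint K_{\stwo{h}}(t-C^Tu_0)\,f_V^{-1}(v)\big[Y_j-E(Y\mid V=v)\big]\,f_{C_z^TZ,C^TU}(v-C_u^Tu_0,t)\,\bar K_{\hbar}(V_j-v)\diff v\,\diff t.
\]
The inner $\hbar$-kernel concentrates the $v$-integral at $v=V_j$, and evaluating the bracket there reproduces exactly $\int K_{\stwo{h}}(t-C^Tu_0)f_{C_z^TZ,C^TU}(V_j-C_u^Tu_0,t)\diff t\cdot f_V^{-1}(V_j)\,[Y_j-E(Y\mid V_j)]=S_n(Y_j,V_j)$, which is the quantity appearing in the lemma.

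To make this rigorous I would apply the standard order-$\mbar$ kernel bias expansion to the $v$-integral for each fixed $t$. Assumption 1 keeps $f_V$ bounded away from zero, while the bounded $\mbar$th derivative of $\gamma$ (Assumption 2) together with the boundedness in Assumption 3(ii) makes the integrand a sufficiently smooth function of $v$; hence the $v$-integral equals its value at $v=V_j$ up to a remainder of order $\hbar^{\mbar}$, bounded uniformly in $j$ by $c\,\hbar^{\mbar}(1+\abs{Y_j})$. Averaging, and invoking the law of large numbers with the moment bound of Assumption 2 so that $n^{-1}\sum_j(1+\abs{Y_j})=O_p(1)$, gives
\[
\frac1n\sum_{j=1}^n\big[h_{nj}-S_n(Y_j,V_j)\big]=O_p(\hbar^{\mbar}).
\]

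Finally, multiplying by $\sqrt{n\stwo{h}^{\stwo{d}}}$ yields $T_{n3}=O_p\big(\sqrt{n\stwo{h}^{\stwo{d}}}\,\hbar^{\mbar}\big)$. With $\stwo{h}$ of order $n^{-1/(2\stwo{m}+\stwo{d})}$ and $\hbar=n^{-\bar s}$ this equals $O_p\big(n^{\stwo{m}(2\stwo{m}+\stwo{d})^{-1}-\bar s\mbar}\big)$, which is $o_p(1)$ precisely because Assumption 4 imposes $\bar s>\mbar^{-1}\stwo{m}(2\stwo{m}+\stwo{d})^{-1}$. The step I expect to be the main obstacle is the second one: justifying the exact reduction to the marginal density $f_{C_z^TZ,C^TU}$ and the change of variables so that the concentrated inner integral reproduces $s_n$ exactly, and then controlling the kernel-bias remainder uniformly in $j$ using only a finite moment of $Y$. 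Once that bookkeeping is settled, the comparison of rates against Assumption 4 is routine.
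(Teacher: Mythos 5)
Your first two steps coincide exactly with the paper's proof: the cancellation $G_n(w,u_0;\gamma)=0$ plus linearity of $G_n$ in its functional argument turns $\int G_n(w,u_0;\hat\gamma-\gamma)\diff F$ into the sample average $n^{-1}\sum_j \int S_n(Y_j,v)\,\bar K_{\hbar}(V_j-v)\diff v$, so (using $\int \bar K_{\hbar}=1$) the lemma reduces to showing $\sqrt{n\stwo{h}^{\stwo{d}}}\,n^{-1}\sum_j D_n(Y_j,V_j)=o_p(1)$ with $D_n(Y_j,V_j)=\int\left[S_n(Y_j,v)-S_n(Y_j,V_j)\right]\bar K_{\hbar}(V_j-v)\diff v$. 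The divergence, and the genuine gap, is in how you control this term. You propose a pointwise (conditional on $(Y_j,V_j)$) order-$\mbar$ bias expansion of $v\mapsto S_n(Y_j,v)$, asserting that Assumption 2 together with Assumption 3(ii) makes this map smooth enough. It does not. Writing $S_n(Y,v)=f_V^{-1}(v)\,b_n(v)\,[Y-E(Y\mid V=v)]$ with $b_n(v)=\int K_{\stwo{h}}(t-C^Tu_0)\,f_{C_z^TZ,C^TU}(v-C_u^Tu_0,t)\diff t$, an order-$\mbar$ expansion requires $\mbar$ bounded derivatives in $v$ of $b_n$, i.e.\ of the joint density $f_{C_z^TZ,C^TU}(\cdot,t)$ in its first argument, uniformly in $t$. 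Assumption 2 controls only $\gamma=(f_V,\,E(Y\mid V)f_V)$, and nothing in Assumptions 1--4 gives differentiability of the joint density of $(C_z^TZ,C^TU)$ in the $C_z^TZ$ direction. So the ``standard kernel bias expansion'' you invoke, and hence the uniform remainder bound $c\,\hbar^{\mbar}(1+\abs{Y_j})$, cannot be justified from the stated hypotheses; you flagged this step as the main obstacle, and it is exactly where the argument breaks.

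The paper's proof is structured precisely to avoid this. It does not expand conditionally on $(Y_j,V_j)$; it applies Chebyshev's inequality to $n^{-1}\sum_j D_n(Y_j,V_j)$ and computes two moments. In the mean, integrating over the law of $(Y_j,V_j)$ gives $E[D_n]=\int s_n(v)\int \bar K(\bar t)\left\{\gamma(v+\hbar\bar t)-\gamma(v)\right\}\diff \bar t \diff v$: the expectation shifts the $\hbar$-smoothing off of $s_n$ and onto $\gamma(v)=E[(1,Y)^T\mid V=v]f_V(v)$, so the Taylor expansion falls on exactly the function whose $\mbar$th derivative Assumption 2 bounds, while $s_n$ only needs to be integrable, not smooth. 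This yields $\sqrt{n\stwo{h}^{\stwo{d}}}\abs{E[D_n]}=O(\sqrt{n\stwo{h}^{\stwo{d}}}\,\hbar^{\mbar})=o(1)$, the same rate comparison against Assumption 4 that you carry out (your bookkeeping there is correct). The second moment is handled by the crude bound $\stwo{h}^{\stwo{d}}E[D_n^2]=O(\stwo{h}^{\stwo{d}})\to 0$, which needs only boundedness of $s_n$ and Assumption 3(ii). To repair your argument you would either have to add an assumption that $f_{C_z^TZ,C^TU}(\cdot,t)$ has $\mbar$ bounded derivatives in its first argument, or switch to the paper's expectation-first Chebyshev route.
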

\begin{proof}


Since
\begin{align*}
&\sqrt{n\stwo{h}^{\stwo{d}}}\int G_{n}(w, u_0; \hat{\gamma}-\gamma)\diff F=\sqrt{n\stwo{h}^{\stwo{d}}}\int G_{n}(w, u_0; \hat{\gamma})\diff F\\
& = \sqrt{n\stwo{h}^{\stwo{d}}}\dfrac{1}{n}\sum_{j=1}^{n}\int S_{n}(Y_j,C_z^T z+C_u^T u_0) \bar{K}_{\hbar}(C_z^T z+C_u^T u_0-C_z^T Z_j+C_u^T U_j)\diff (C_z^T z),
\end{align*}
$T_{n3}$ can be written as 
\begin{align*}
T_{n3}=&\sqrt{n\stwo{h}^{\stwo{d}}}\dfrac{1}{n}\sum_{j=1}^{n}\int \left[S_{n}(Y_j, C_z^T z+C_u^T u_0)-S_{n}(Y_j, V_j)\right]\bar{K}_{\hbar}(C_z^T z+C_u^T u_0-V_j)\diff (C_z^T z)\\
=&\sqrt{n\stwo{h}^{\stwo{d}}}\dfrac{1}{n}\sum_{j=1}^{n}D_{n}(Y_j,V_j)
\end{align*}
by Chebyshev's inequality,
\begin{align*}
&P\left(\abs[\bigg]{\sqrt{n\stwo{h}^{\stwo{d}}}\dfrac{1}{n}\sum_{j=1}^{n}D_{n}(Y_j,V_j)}^2>\epsilon^2\right)\\
&\leq n \stwo{h}^{\stwo{d}}\bigg\{n(n-1)\abs{E[D_{n}(Y_j,V_j)]}^2+nE[D_{n}^2(Y_j,V_j)]\bigg\}\bigg/(n^2\epsilon^2),
\end{align*}
so we only need to prove
$\sqrt{n\stwo{h}^{\stwo{d}}}\abs{E[D_{n}(Y_j,V_j)]}\to 0$ and $\stwo{h}^{\stwo{d}}E[D_{n}^2(Y_j,V_j)]\to 0$.
By 
Assumptions 
1, 
2, 
3
(ii) and 
4, 
\begin{align*}
&\sqrt{n\stwo{h}^{\stwo{d}}}\abs[\bigg]{E[D_{n}(Y_j,V_j)]}\\
=&\sqrt{n\stwo{h}^{\stwo{d}}}\abs[\bigg]{E\left[\int S_{n}(Y_j,C_z^T z+C_u^T u_0)\bar{K}_{\hbar}(C_z^T z+C_u^T u_0-V_j)\diff (C_z^T z)\right]-E\left[S_{n}(Y_j,V_j)\right]}\\
=&\sqrt{n\stwo{h}^{\stwo{d}}}\left|\int s_{n}(C_z^T z+C_u^T u_0) E\bigg\{\bar{K}_{\hbar}(C_z^T z+C_u^T u_0-V_j)E\left[(1,Y_j)^{T}| V_j\right]\bigg\} \diff (C_z^T z)\right.\\
&\left.-E\bigg\{s_{n}(V_j)E[(1,Y_j)^{T}|V_j]\bigg\}\right|\\
=&\sqrt{n\stwo{h}^{\stwo{d}}}\left| \int s_{n}(v)\int \bar{K}(\bar{t}) \bigg\{E\left[(1,Y_j)^{T}| V_{j}=v+\hbar\bar{t}\right]f_{V}(v+\hbar\bar{t})\right.\\
&\left. - E[(1,Y_j)^{T}|V_{j}=v] f_{V}(v) \bigg\}\diff \bar{t}\diff v\right|\\
\leq &\sqrt{n\stwo{h}^{\stwo{d}}}\left| \int s_{n}(v)\sum_{k=1}^{\mbar-1}\hbar^{k}[k!]^{-1} \dfrac{\partial^{k} E[(1,Y_j)^{T}|V_{j}=v] f_{V}(v)}{\partial v^{k}} \int \bar{K}(\bar{t})[\otimes^{k} \bar{t}]\diff \bar{t}\diff v\right|\\
&+\sqrt{n\stwo{h}^{\stwo{d}}}\hbar^{\mbar}[\mbar!]^{-1}\int \norm{s_{n}(v)}\diff v \norm[\bigg]{\dfrac{\partial^{\mbar} E[(1,Y_j)^{T}|V_{j}=v] f_{V}(v)}{\partial v^{\mbar}}}_{\infty} \int\norm[\bigg]{\bar{K}(\bar{t})[\otimes^{\mbar} \bar{t}]}\diff \bar{t}\\
&=O_{p}(\sqrt{n\stwo{h}^{\stwo{d}}}\hbar^{\mbar})\to 0
\end{align*}
and by 
Assumptions 
1 and 
3
(ii),
\begin{align*}
&\stwo{h}^{\stwo{d}}E[D_{n}^2(Y_j,V_{j})]\\
&\leq \stwo{h}^{\stwo{d}}E\left[\abs[\bigg]{\int S_{n}(Y_j,C_z^T z+C_u^T u_0)\bar{K}_{\hbar}(C_z^T z+C_u^T u_0-V_{j})\diff (C_z^T z)}^2\right]+\stwo{h}^{\stwo{d}}E\left[S^2_{n}(Y_j,V_{j})\right]\\
& \leq \stwo{h}^{\stwo{d}}\sup_{\norm{\nu}<\epsilon} E[S_{n}^2(Y_j,V_{j}+\nu)]\left[\int \abs{\bar{K}(\bar{t})}\diff \bar{t}\right]^2+\stwo{h}^{\stwo{d}}E\left[S_{n}^2(Y_j,V_{j})\right]\\
&=O_{p}(\stwo{h}^{\stwo{d}})\to 0
\end{align*}
Thus, $T_{n3}=o_p(1)$ and the proof is completed.
\end{proof}
\subsection*{Proof of Theorem 1}
\begin{proof}
The proof follows the similar ideas as in \citet{newey_mcfadden_1994} and \citet{escanciano_et_al}.
From 
Lemma \ref{lemSDRneg}, 
\begin{align*}
 \sqrt{n\stwo{h}^{\stwo{d}}}\Big[\hat{\psi}_{\stwo{d}}(u_0;\hat{\gamma}, \hat{C}_{zu}, \hat{C})-\psi(u_0)\Big]=\sqrt{n\stwo{h}^{\stwo{d}}}\Big[\hat{\psi}_{\stwo{d}}(u_0;\hat{\gamma}, C_{zu}, C)-\psi(u_0)\Big]+o_{p}(1)
 \end{align*}
 Moreover, 
 \begin{align*}
 &\sqrt{n\stwo{h}^{\stwo{d}}}\left\{\hat{\psi}_{\stwo{d}}(u_0;\hat{\gamma},C_{zu}, C)-\psi(u_0)\right\}\\
 =&\sqrt{n\stwo{h}^{\stwo{d}}}\Bigg\{\dfrac{\frac{1}{n}\sum_{i=1}^{n}\dfrac{\hat{\gamma}_2(C_z^T Z_i+C_u^T u_0)}{\hat{\gamma}_1(C_z^T Z_i+C_u^T u_0)}K_{\stwo{h}}(C^T U_i-C^T u_0)}{\frac{1}{n}\sum_{i=1}^{n}K_{\stwo{h}}(C^T U_i-C^T u_0)}\\
 &-E[E(Y|V=C_z^T Z+C_u^T u_0) \mid C^T U=C^T u_0]\Bigg\}\\
 =&\sqrt{n\stwo{h}^{\stwo{d}}}\left\{\dfrac{1}{n}\sum_{i=1}^{n}\dfrac{\hat{\gamma}_2(C_z^T Z_i+C_u^T u_0)}{\hat{\gamma}_1(C_z^T Z_i+C_u^T u_0)}K_{\stwo{h}}(C^T U_i-C^T u_0)\right.\\
 &\left.-\dfrac{1}{n}\sum_{i=1}^{n}K_{\stwo{h}}(C^T U_i-C^T u_0)E[E(Y|V=C_z^T Z+C_u^T u_0)\right.\\
 &\mid C^T U=C^T u_0]\Bigg\}\left[\dfrac{1}{n}\sum_{i=1}^{n}K_{\stwo{h}}(C^T U_i-C^T u_0)\right]^{-1},
 \end{align*}
 where its numerator can be split as
\begin{align*}
&\sqrt{n\stwo{h}^{\stwo{d}}}\dfrac{1}{n}\sum_{i=1}^{n}K_{\stwo{h}}(C^T U_i-C^T u_0)\Bigg\{\dfrac{\hat{\gamma}_2(C_z^T Z_i+C_u^T u_0)}{\hat{\gamma}_1(\hat{C}_z^T Z_i+\hat{C}_u^T u_{0})}\\
&-E\left[E(Y|V=C_z^T Z+C_u^T u_0)\mid C^T U=C^T u_0\right]\Bigg\}\\
=&T_{n1}+T_{n2}+T_{n3}+\sqrt{n\stwo{h}^{\stwo{d}}}\dfrac{1}{n}\sum_{j=1}^{n} S_{n}(Y_j, V_j)\\
&+\sqrt{n\stwo{h}^{\stwo{d}}}\dfrac{1}{n}\sum_{i=1}^{n}K_{\stwo{h}}(C^T U_i-C^T u_0)\Bigg\{E(Y|V=C_z^T Z_i+C_u^T u_0)\\
&-E\left[E(Y|V=C_z^T Z+C_u^T u_0) \mid C^T U=C^T u_0\right]\Bigg\}
\end{align*}
By
Lemmas \ref{lemnonlin} - \ref{lemlin}, $T_{n1}$, $T_{n2}$ and $T_{n3}$ are all $o_{p}(1)$.  
It is also easy to prove 
$$\sqrt{n\stwo{h}^{\stwo{d}}}\dfrac{1}{n}\sum_{j=1}^{n} S_{n}(Y_j,V_j)=O_{p}(\sqrt{n\stwo{h}^{\stwo{d}}}n^{-1/2})=o_p(1)$$
 As a result, 
\begin{align*}
&\sqrt{n\stwo{h}^{\stwo{d}}}\dfrac{1}{n}\sum_{i=1}^{n}K_{\stwo{h}}(C^T U_i-C^T u_0)\bigg\{\dfrac{\hat{\gamma}_2(C_z^T Z_i+C_u^T u_0)}{\hat{\gamma}_1(C_z^T Z_i+C_u^T u_0)}\\
&-E\left[E(Y|V=C_z^T Z+C_u^T u_0)|C^T U=C^T u_0\right]\bigg\}\\
=&\sqrt{n\stwo{h}^{\stwo{d}}}\dfrac{1}{n}\sum_{i=1}^{n}K_{\stwo{h}}(C^T U_i-C^T u_0)\Big\{E[Y|V=C_z^T Z_i+C_u^T u_0]\\
&-E\left[E(Y|V=C_z^T Z+C_u^T u_0) \mid C^T U=C^T u_0\right]\Big\}+o_{p}(1)
\end{align*}
Follow the similar proof as in Theorem 2.2.2 of \citet{bierens}, when 
Assumptions 
3 
(v) and 
4 are satisfied,
\begin{equation*}\begin{split}
&n^{m/(2m+\stwo{d})}\bigg[\hat{\psi}_{\stwo{d}}(u_0;\hat{\gamma},\hat{C}_{zu}, \hat{C})-\psi(u_0)\bigg]\Longrightarrow N\left(\dfrac{\lambda b(C^T u_0)}{f_{C^T U}(C^T u_0)},\dfrac{g(C^T u_0)}{f_{C^T U}(C^T u_0)}\int K^2(\stwo{t})\diff \stwo{t} \right) 
\end{split}\end{equation*}
and its optimal convergence rate is $n^{-\stwo{m}/(2\stwo{m}+\stwo{d})}$. 

Specifically, when $\stwo{d}=d_1$, $\hat{\psi}_{\stwo{d}}(u_0)$ becomes $\hat{\psi}_{d_1}(u_0)$, 
(6) holds and its optimal convergence rate is $n^{-\stwo{m}/(2\stwo{m}+d_1)}$;  when $\stwo{d}=d_2$ and Assumptions 
1 - 
4 are satisfied with $C_z$, $C_u$, $C$ replaced by $D_z$, $BD_u$, $BD$ respectively, $\hat{\psi}_{\stwo{d}}(u_0)$ becomes $\hat{\psi}_{d_2}(u_0)$, 
(6) holds with $C_z$, $C_u$, $C$ replaced by $D_z$, $BD_u$, $BD$ respectively and $d=d_2$, and its optimal convergence rate is $n^{-\stwo{m}/(2\stwo{m}+d_2)}$.
\end{proof}

 \bibliographystyle{apalike}
 \bibliography{ref}

\begin{table}[h]
\centering
\caption{Values of $\hat\sigma^2$ and AMSE = $CV(10)- \hat\sigma^2$ 
for DVD, TC, and UC}\label{realdata_new_fixedB}
\begin{tabular}{cccccc}
  \hline
 & $\hat{\sigma}^2$ & AMSE($\hat{\psi}_p $)  & AMSE($\hat{\psi}_{d_0} $) & AMSE$(\hat{\psi}_{d_1}$) & AMSE($\hat{\psi}_{d_2}$)  \\ 
  \hline
DVD & 95.372 &  101.31 & 17.290 & 17.700 & 0.668 \\
TC & 86.803&   105.77& \ 4.394 & \ 4.944 & 3.799 \\  
UC & 101.35 &  157.94 & \ 3.358 & \ 7.453 & 2.789 \\
   \hline
\end{tabular}
\end{table}

\end{document}